\newtheorem{theorem}{\bf Theorem}[section]
\newtheorem{proposition}{\bf Proposition}[section]
\newcommand{\R}{\mathbb{R}} 
\newcommand{\C}{\mathbb{C}} 
\newcommand{\real}{\mathbb R} 
\newcommand{\half}{\tfrac{1}{2}} 
\newcommand{\mo}[1]{\left| #1 \right|} 
\newcommand{\floor}[1]{\lfloor #1 \rfloor} 
\newcommand{\abs}{\mo} 
\newcommand{\xx}{\mathcal{X}}
\newcommand{\rr}{\mathcal{R}}
\newcommand{\hi}{\mathcal{H}} 
\newcommand{\hh}{\mathcal{H}} 
\newcommand{\lh}{\mathcal{L(H)}} 
\newcommand{\lhs}{\mathcal{L}_s(\hi)} 
\newcommand{\kb}[2]{|#1\rangle\langle#2|} 
\newcommand{\no}[1]{\left\|#1\right\|} 
\newcommand{\tr}[1]{{\rm tr}\left[#1\right]} 
\newcommand{\id}{\mathbbm{1}} 
\renewcommand{\rho}{\varrho}
\newcommand{\lam}{\lambda}
\newcommand{\rank}{{\rm rank}\,} 
\newcommand{\rankbis}{{\rm rank}} 
\newcommand{\va}{\mathbf{a}} 
\newcommand{\vb}{\mathbf{b}} 
\newcommand{\vr}{\mathbf{r}} 
\newcommand{\vsigma}{\boldsymbol{\sigma}} 
\newcommand{\Ac}{\mathcal{A}}
\newcommand{\prem}{\mathcal{P}}
\newcommand{\state}{\mathcal{S}} 
\newcommand{\tzero}{\mathcal{T}_0} 
\title[]{Probing quantum state space: \\
does one have to learn everything to learn something?}
\begin{document}\setlength{\arraycolsep}{2pt}

\begin{abstract}
Determining the state of a quantum system is a consuming procedure.
For this reason, whenever one is interested only in some particular property of a state, it would be desirable to design a measurement setup that reveals this property with as little effort as possible. 
Here we investigate whether, in order to successfully complete a given task of this kind, one needs an informationally complete measurement, or if something less demanding would suffice. 
The first alternative means that in order to complete the task, one needs a measurement which fully determines the state. We formulate the task as a membership problem related to a partitioning of the quantum state space and, in doing so, connect it to the geometry of the state space. For a general membership problem we prove various sufficient criteria that force informational completeness, and we explicitly treat several physically relevant examples. For the specific cases that do not require informational completeness, we also determine bounds on the minimal number of measurement outcomes needed to ensure success in the task. 
\end{abstract}

\author[Carmeli]{Claudio Carmeli}
\address{\textbf{Claudio Carmeli};DIME, Universit\`a di Genova, Via Magliotto 2, I-17100 Savona, Italy}
 \email{claudio.carmeli@gmail.com}

\author[Heinosaari]{Teiko Heinosaari}
\address{\textbf{Teiko Heinosaari}; Turku Centre for Quantum Physics, Department of Physics and Astronomy, University of Turku, FI-20014 Turku, Finland}
\email{teiko.heinosaari@utu.fi}

\author[Schultz]{Jussi Schultz}
\address{\textbf{Jussi Schultz}; Turku Centre for Quantum Physics, Department of Physics and Astronomy, University of Turku, FI-20014 Turku, Finland}
\email{jussischultz@gmail.com}

\author[Toigo]{Alessandro Toigo}
\address{\textbf{Alessandro Toigo}; Dipartimento di Matematica, Politecnico di Milano, Piazza Leonardo da Vinci 32, I-20133 Milano, Italy, and I.N.F.N., Sezione di Milano, Via Celoria 16, I-20133 Milano, Italy}
\email{alessandro.toigo@polimi.it}

\maketitle

\section{Introduction.}

When a quantum system undergoes a particular preparation procedure, it is left in a well-defined quantum state. In order to gain information about the state, one must perform measurements on the system. With a sufficiently large number of measurement runs it is possible to perform full state tomography, that is, to reconstruct the state and thus reveal all of the system's properties \cite{QSE04}. A measurement set-up which is sufficient for this task is called {\em informationally complete} \cite{Prugovecki77}. However, full state tomography is a consuming task, and in many cases one might well be satisfied with only partial information about the system. 
For instance, when an experimental set-up is built with the aim of preparing a specific quantum state, one may want to check that the device works as planned up to a certain error threshold. 
This task amounts to designing a measurement set-up that simply verifies if the prepared state is close to the target state with respect to a given distance measure.
Another aim could be to check the quality of the preparation procedure by estimating some bound on the von Neumann entropy of the quantum state, but not aiming to know anything more. 

Since these tasks are simple yes/no-questions, one might hope to achieve the answers with a measurement with two outcomes, or at least with something less than an informationally complete set-up. However, the formalism of quantum theory puts limitations on the possible measurements, so it is far from obvious that this can be done. Indeed, it was recently shown that the binary question of whether a bipartite state is entangled or separable requires an informationally complete measurement \cite{CaHeKaScTo16,Luetal16}.

In this work, we study the problem of deciding which one among finitely many mutually exclusive properties a quantum system possesses. We formulate this as a membership problem: each of the properties corresponds to a subset $\prem_j$ of the state space $\state$ of the system, and we want to know what kind of measurements allow us to infer which subset an unknown state belongs to. In particular, our main focus is on the following structural question on the duality of states and measurements:
\begin{quote}
\emph{What properties of an unknown quantum state can be inferred without resorting to full state tomography?}
\end{quote}

We consider this problem on the general level, as well as through some physically relevant examples. The general results of Section~\ref{sec:membership} highlight the role played by the geometry of the subsets of states representing the properties, and we find many conditions which force the successful set-ups to be informationally complete. In Section~\ref{sec:specific} we consider the problem for specific examples such as the quality control problems outlined above.
For the cases that do not require full state reconstruction, we also obtain bounds on the minimal number of outcomes that the  optimal measurements must have. 
These are discussed in Section~\ref{sec:minimal}. 
The conclusion are presented in Section~\ref{sec:conclusion}.

\section{Preliminaries.}\label{sec:prelim}

\subsection{Quantum state space.}

he mathematical description of a $d$-level quantum system is based on the Hilbert space $\hi=\C^d$. We denote by $\lh$ the space of linear operators on $\hi$, and by $\lhs$ the real subspace of selfadjoint (i.e.~Hermitian) operators. The states of the system are represented by positive unit trace operators, and we denote by $\state=\{ \varrho\in\lhs : \varrho\geq 0, \tr{\varrho}=1 \}$ the state space of the system. The state space is therefore a convex subset of $\lhs$, and its affine hull, i.e. the smallest affine space containing it,
is  the  space of selfadjoint trace one operators. 
The state space is equipped with the natural topology coming from the trace norm $\Vert A\Vert_1  = \tr{\sqrt{A^*A}}$, and it is compact in $\lhs$ with respect to this topology.  
In particular \(\state\) is a compact convex set with nonempty interior in its affine hull, and hence can be viewed as a convex body.
At some points, we are also going to use the Hilbert-Schmidt norm $\Vert A\Vert_2  = \sqrt{\tr{A^*A}}$.

We say that a state $\varrho\in\state$ belongs to the (geometric) {\em interior} of the state space, denoted by ${\rm Int}\, \state$, if for each $\sigma\in\state$ there exists a $\sigma'\in\state$ and $t\in(0,1)$ such that $\varrho = t\sigma + (1-t)\sigma'$.
The geometric interior of $\state$ coincides with the usual notion of relative interior of a convex set (see e.g.~\cite[Theorem 6.4]{Rockafellar}), which is the topological interior of $\state$ when viewed as a subset of its affine hull. 
It is easy to see that a state $\varrho$ is in the interior if and only if it has full rank, i.e., ${\rm rank}\, \varrho = d$ (see e.g.~\cite{GQS06}).

The complement of the interior with respect to the state space is called the {\em boundary} of the state space, and we denote it by $\partial \state$. A state is thus on the boundary if and only if it has $0$ as an eigenvalue. 
A special class of boundary states consists of extreme elements of $\state$, and these are called \emph{pure states}. 
Pure states correspond to one-dimensional projections.

\subsection{Quantum measurements.}

A measurement performed on a quantum system is represented by a positive operator valued measure (or POVM for short) \cite{MLQT12}.  A POVM with finitely many  outcomes is a map $E$ that assigns a positive operator $E_j$ to each outcome $j$ and satisfies the normalization condition $\sum_j E_j = \id$. Given a system in a state $\varrho\in\state$, the outcomes of a statistical experiment are then distributed according to the probabilities $p_j = \tr{\varrho E_j}$. 

Since we are interested on the amount of information about the state that can be extracted in a statistical experiment, we can switch to a slightly more flexible mathematical description and ignore the specific outcomes of a measurement. 
Indeed, using linearity we can instead consider the real linear span of the POVM, $\rr_E = \{ \sum_j r_j E_j  : r_j\in\R\}$. 
The numbers $\tr{\varrho A}$, $A\in\rr_E$, can be calculated from the probabilities $p_j$, and vice versa. The subspace $\rr_E\subseteq \lhs$ is closed under taking adjoints and contains the identity; hence it is an operator system (see e.g. \cite[Chapter 2]{CBMOA03}).
Any operator system can be generated by a POVM \cite{HeMaWo13}, but we can equally well consider any other collection $\Ac=\{A_1,A_2,\ldots, A_n\}$ of selfadjoint operators which, together with the identity $\id$, generate the same operator system. 
The numbers $\tr{\varrho A_j}$ then correspond to expectation values of measurements rather than just probabilities.

One can think of an operator system as an equivalence class of POVMs; two POVMs are equivalent if they span the same operator system. 
The questions we will investigate do not depend on the specific POVM but only on the generated operator system.
Hence, by a quantum measurement we will mean either a POVM or an operator system, this leading to no confusion. 

\subsection{Distinguishing states}\label{subsec:prelimC}

In a measurement of a POVM $E$, two states $\varrho_1,\varrho_2\in\state$ give the same outcome probability distribution if and only if $\tr{(\varrho_1-\varrho_2)E_j}=0$ for all outcomes $j$. By linearity, this means that the selfadjoint operator $\Delta = \varrho_1-\varrho_2$ is orthogonal to the operator system $\rr_E$ with respect to the (real) Hilbert-Schmidt inner product $\langle S\mid T\rangle_2 =\tr{ST}$. More generally, when we describe a measurement setting in terms of an operator system $\rr$, it is the orthogonal complement $\rr^\perp$ in $\lhs$ that describes the measurement's ability to distinguish between states. Since $\rr$ contains the identity operator $\id$, the orthogonal complement $\rr^\perp$ is a subspace of the real vector space of traceless selfadjoint operators, denoted by $\tzero$. Conversely, for any subspace $\xx\subseteq\tzero$, the orthogonal complement $\xx^\perp$ of $\xx$ in $\lhs$ is an operator system and thus generated by some POVM (see, for instance, \cite[Proposition 1]{HeMaWo13}). 

A measurement that can distinguish between any two different states is called {\em informationally complete} \cite{Prugovecki77,BuLa89}. 
In terms of operator systems, this means that $\rr=\lhs$ \cite{Busch91,SiSt92}, or equivalently, $\rr^\perp = \{0\}$. 
Indeed, if $\rr^\perp$ contains a nonzero operator $\Delta$, then we can write $\Delta = \Delta_+ - \Delta_-$, where $\pm\Delta_\pm$ are the positive and negative parts of $\Delta$ as given by the spectral decomposition. Since $\tr{\Delta} = 0$, we have $\tr{\Delta_+} = \tr{\Delta_-} \equiv \lambda >0$, hence $\Delta = \lambda (\varrho_+ - \varrho_-)$, for the two different states $\varrho_\pm = \Delta_\pm / \lambda$. But this means precisely that $\varrho_+$ and $\varrho_-$ are indistinguishable by the measurement in question.

\section{Membership problem}\label{sec:membership}

\subsection{General formulation}\label{subsec:general}

We consider a quantum system which possesses one of $N$ mutually exclusive properties. Mathematically, each property corresponds to a subset $\prem_j$ of the state space, and the task is to identify which subset a given state belongs to. We assume that the system possesses one of the properties, so that in general our  membership problem is defined by specifying a partitioning $\state= \cupdot_j \prem_j  $. 
Here the symbol $\cupdot$ underlines the fact that we are assuming the subsets \(\prem_j\) to be mutually disjoint, reflecting the fact that the properties are mutually exclusive. Our task is to infer the index $j$ for which $\varrho\in\prem_j$ by performing some measurement, and the main question is whether or not this can be done without requiring the measurement to be informationally complete.

An operator system $\rr$ solves the membership problem $\state= \cupdot_j \prem_j$ if and only if for all $\varrho_j\in\prem_j$ and $\varrho_k\in\prem_k$ with $j\neq k$ we have $\varrho_j-\varrho_k\notin \rr^\perp$, as otherwise those two states would be indistinguishable by $\rr$ and it would not be possible to infer their respective properties.
In particular, the membership problem can be solved {\em without} informational completeness if and only if we can find an $\rr$ such that $\rr^\perp \neq \{0 \}$, but which still succeeds in the task. Decreasing the size of $\rr^\perp$ (i.e., adding more measurement outcomes) does not affect the condition $\varrho_j-\varrho_k\notin \rr^\perp$. Hence, it is enough to study the cases where $\rr^\perp$ is one-dimensional. By turning this around, we conclude that informational completeness {\em is needed} if and only if any measurement with one-dimensional space $\rr^\perp$ is such that $\varrho_j-\varrho_k\in \rr^\perp$ for some $j\neq k$ and $\varrho_j\in\prem_j$ and $\varrho_k\in\prem_k$.

The membership problem can be understood intuitively as a question about the geometry of the corresponding subsets of states \cite{CaHeKaScTo16}. Indeed, by writing $\rr^\perp = \R\Delta$ for the one-dimensional space in the previous paragraph, the relevant question is whether or not some nonzero traceless operator $\Delta$ allows a decomposition of the form $\Delta=\lambda (\varrho_j-\varrho_k)$, with $\lam\in\R$ and the states coming from different subsets $\prem_j$ and $\prem_k$.
The following theorem now characterizes the membership problems which require informational completeness. 
For any two subsets $\prem_j,\prem_k \subset \state$, we denote $\R(\prem_j-\prem_k) = \{ \lam(\varrho_j-\varrho_k) : \varrho_j \in \prem_j,\ \varrho_k \in \prem_k,\ \lam\in\R \}$. Moreover, recall that $\tzero = \{ \Delta \in\lhs : \tr{\Delta} = 0 \}$. 

\begin{theorem}\label{thm:learn}
A membership problem $\state=\cupdot_j \prem_j$ cannot be solved without an informationally complete measurement if and only if
\begin{align*}
\bigcup_{j\neq k}\real(\prem_j - \prem_k) = \tzero \, .
\end{align*}
\end{theorem}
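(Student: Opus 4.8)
The plan is to unwind the definitions supplied in the excerpt so that the statement becomes a tautology about one-dimensional orthogonal complements. Recall from the discussion preceding the theorem that an operator system $\rr$ solves the membership problem exactly when $\varrho_j - \varrho_k \notin \rr^\perp$ for all $j \neq k$ and all $\varrho_j \in \prem_j$, $\varrho_k \in \prem_k$; and that, since enlarging $\rr$ only shrinks $\rr^\perp$, it suffices to consider $\rr^\perp$ one-dimensional, say $\rr^\perp = \real\Delta$ with $\Delta \in \tzero \setminus \{0\}$. So the membership problem \emph{can} be solved without informational completeness if and only if there exists a nonzero $\Delta \in \tzero$ such that $\real\Delta \cap \real(\prem_j - \prem_k) = \{0\}$ for every pair $j \neq k$, equivalently $\Delta \notin \bigcup_{j\neq k} \real(\prem_j - \prem_k)$. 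Here I use that each set $\real(\prem_j - \prem_k)$ is a union of lines through the origin, so "$\Delta$ is not a scalar multiple of any $\varrho_j - \varrho_k$" is the same as "$\Delta \notin \real(\prem_j-\prem_k)$".

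From this reformulation the equivalence is immediate by contraposition. First I would argue the easy direction: if $\bigcup_{j\neq k}\real(\prem_j-\prem_k) = \tzero$, then for \emph{every} nonzero $\Delta \in \tzero$ we have $\Delta \in \real(\prem_j - \prem_k)$ for some $j \neq k$, hence no measurement with $\rr^\perp$ one-dimensional (and therefore, by the reduction above, no measurement at all with $\rr^\perp \neq \{0\}$) can solve the problem; so informational completeness is needed. Conversely, if $\bigcup_{j\neq k}\real(\prem_j-\prem_k) \subsetneq \tzero$, pick $\Delta \in \tzero$ outside this union; then $\Delta \neq 0$ since $0$ lies in $\real(\prem_j-\prem_k)$ for any nonempty pair, and the operator system $\rr = (\real\Delta)^\perp$ (which is an operator system containing $\id$, as noted in the Preliminaries, and is generated by some POVM) has $\rr^\perp = \real\Delta \neq \{0\}$ yet satisfies $\varrho_j - \varrho_k \notin \rr^\perp$ for all cross pairs, so it solves the membership problem and is not informationally complete.

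One point that deserves a sentence of care, and which I expect to be the only real subtlety, is the passage "it suffices to take $\rr^\perp$ one-dimensional": I should make sure the argument handles the case where a successful non-IC measurement has $\dim \rr^\perp \geq 2$. But this is exactly the monotonicity observation already made in the text — if $\rr^\perp \supseteq \real\Delta$ and $\rr$ solves the problem, then a fortiori the larger operator system $(\real\Delta)^\perp \supseteq \rr$ also solves it, and its complement is one-dimensional — so no new work is needed beyond citing that paragraph. The remaining ingredients (that $(\real\Delta)^\perp$ is a genuine operator system arising from a POVM, and that informational completeness means $\rr^\perp = \{0\}$) are all recalled in Section~\ref{subsec:prelimC}, so the proof is essentially a bookkeeping of the definitions with the contrapositive taken on both sides.
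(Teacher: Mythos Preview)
Your proposal is correct and follows essentially the same approach as the paper's proof: both reduce to the observation, already established in the preceding discussion, that informational completeness is forced precisely when every nonzero $\Delta\in\tzero$ lies in some $\real(\prem_j-\prem_k)$, and then read off the set equality. Your write-up is more explicit about the monotonicity step and the construction of the operator system $(\real\Delta)^\perp$, but the underlying argument is identical.
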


\begin{proof}
We have seen that the membership problem requires an informationally complete measurement if and only if, for any $\Delta\in\tzero$, there exist two states $\varrho_j\in\prem_j$ and $\varrho_k\in\prem_k$ with $j\neq k$, and a $\lambda\in\R$ such that $\Delta = \lambda (\varrho_j - \varrho_k)$. 
The indices $j$ and $k$ may depend on the operator $\Delta$, hence this is equivalent to $\tzero\subseteq\bigcup_{j\neq k}\real (\prem_j - \prem_k)  $.
The other inclusion is trivial, thus the two sets are equal.
\end{proof}

In the present context, it is natural to call any nonzero operator $\Delta\in\tzero$ a {\em perturbation operator}. This terminology is further justified by Fig.~\ref{Fig:Data0}, which serves as a graphical illustration of our previous discussion. Moreover, Proposition \ref{prop:learn-2} below formalizes the idea for a membership problem involving two blocks, $\prem$ and $\prem^C$. As explained at the end of Section \ref{sec:prelim}.\ref{subsec:prelimC} by decomposing $\Delta$ into its positive and negative parts, \emph{any perturbation operator is, up to a nonzero scalar multiple, a difference of two distinct states}.

\begin{figure}[!h]
   \begin{minipage}{0.40\textwidth}
     \centering
     \includegraphics[width=.9\linewidth]{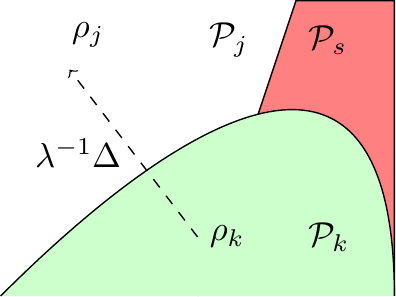}
         \caption{}\label{Fig:Data0}
   \end{minipage}\hfill
   \begin {minipage}{0.60\textwidth}
{F{\Small{IGURE}}\,1. In geometric terms, we can interpret a perturbation operator $\Delta$ as a direction in the state space, and we need to see if in that direction it is possible to cross the boundary between blocks $\prem_j$ and $\prem_k$. Indeed, solving $\Delta = \lam(\rho_j-\rho_k)$ for $\varrho_j = \varrho_k + \lambda^{-1}\Delta$, the question in Theorem \ref{thm:learn} then reduces to the existence of some $\varrho_k\in\prem_k$ such that, by adding a small perturbation $\lambda^{-1}\Delta$, we cross the boundary and obtain a state $\varrho_j\in\prem_j$. }
    \end{minipage}
\end{figure}

\begin{proposition}\label{prop:learn-2}
A membership problem $\state=\prem \cupdot \prem^C$ cannot be solved without an informationally complete measurement if and only if for any perturbation operator $\Delta$, there is a state $\varrho \in \prem$ and a $\lambda\in\R$ such that $\varrho + \lambda \Delta \in \prem^C$.
\end{proposition}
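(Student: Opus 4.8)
The plan is to derive Proposition~\ref{prop:learn-2} as a direct specialization of Theorem~\ref{thm:learn} to the two-block partition $\state = \prem \cupdot \prem^C$, and then to reformulate the set-equality condition in the equivalent ``perturbation'' language already suggested by Figure~\ref{Fig:Data0}. With only two blocks the index pairs $(j,k)$ with $j\neq k$ are just $(\prem,\prem^C)$ and $(\prem^C,\prem)$, so the union $\bigcup_{j\neq k}\real(\prem_j-\prem_k)$ collapses to $\real(\prem-\prem^C)\cup\real(\prem^C-\prem)$. Since $\real(\prem^C-\prem)=\real(\prem-\prem^C)$ (the scalar $\lambda$ ranges over all of $\R$, hence the set is symmetric under sign flip), this union is simply $\real(\prem-\prem^C)$. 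Thus Theorem~\ref{thm:learn} tells us the membership problem requires informational completeness if and only if $\real(\prem-\prem^C) = \tzero$.

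The remaining step is to unpack $\real(\prem-\prem^C)=\tzero$ into the stated form. First I would observe that the inclusion $\real(\prem-\prem^C)\subseteq\tzero$ is automatic, since a difference of two states is traceless and $\tzero$ is a linear subspace. So the condition is equivalent to $\tzero\subseteq\real(\prem-\prem^C)$, i.e.\ every perturbation operator $\Delta\in\tzero$ can be written as $\Delta = \lambda(\varrho-\sigma)$ with $\varrho\in\prem$, $\sigma\in\prem^C$, and $\lambda\in\R$. Here I should note that $\lambda\neq 0$ automatically for $\Delta\neq 0$, and also handle $\Delta = 0$ trivially (it is a difference of a state with itself, but that does not straddle the two blocks --- however, since we only need the condition for \emph{nonzero} $\Delta$, as made explicit in the definition of perturbation operator, this is not an issue; alternatively one notes $0$ lies in $\real(\prem-\prem^C)$ vacuously is false, so one must be slightly careful and restrict to nonzero $\Delta$, which matches the proposition's phrasing ``for any perturbation operator $\Delta$''). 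Rearranging $\Delta = \lambda(\varrho-\sigma)$ gives $\varrho = \sigma + \lambda^{-1}\Delta$; relabelling $\lambda^{-1}$ as $\lambda$ (allowed since the nonzero reals are closed under inversion) and renaming $\sigma$ as $\varrho$, the condition becomes: for every perturbation operator $\Delta$ there exists $\varrho\in\prem^C$ and $\lambda\in\R$ with $\varrho+\lambda\Delta\in\prem$. Swapping the roles of $\prem$ and $\prem^C$ (legitimate because, as noted, $\real(\prem-\prem^C)$ is symmetric, so the problem is symmetric in the two blocks) yields exactly the statement in the proposition.

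I do not anticipate a genuine obstacle here; the proposition is essentially a translation exercise. The one point requiring a little care is the bookkeeping around the scalar $\lambda$ --- making sure that passing between ``$\Delta = \lambda(\varrho-\sigma)$'' and ``$\varrho = \sigma+\lambda\Delta$'' is justified (it is, since nonzero $\Delta$ forces nonzero $\lambda$, and $\R\setminus\{0\}$ is closed under reciprocal) --- together with noting that the definition of perturbation operator already excludes $\Delta=0$, so no separate argument for that degenerate case is needed. The symmetry $\real(\prem-\prem^C)=\real(\prem^C-\prem)$, which lets us present the condition with either block playing the role of the ``target,'' should also be stated explicitly rather than left implicit.

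\begin{proof}
This is the two-block instance of Theorem~\ref{thm:learn}. For the partition $\state = \prem\cupdot\prem^C$ the only pairs of distinct blocks are $(\prem,\prem^C)$ and $(\prem^C,\prem)$, and since $\real(\prem^C-\prem) = \real(\prem-\prem^C)$ (the scalar in the definition of $\real(\,\cdot\,)$ ranges over all of $\R$), we have
\begin{equation*}
\bigcup_{j\neq k}\real(\prem_j-\prem_k) = \real(\prem-\prem^C).
\end{equation*}
Thus, by Theorem~\ref{thm:learn}, the membership problem cannot be solved without an informationally complete measurement if and only if $\real(\prem-\prem^C) = \tzero$. The inclusion $\real(\prem-\prem^C)\subseteq\tzero$ always holds, since a difference of two states has zero trace and $\tzero$ is a linear subspace. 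Hence the condition is equivalent to $\tzero\subseteq\real(\prem-\prem^C)$: for every perturbation operator $\Delta\in\tzero$ there exist $\varrho_0\in\prem$, $\sigma_0\in\prem^C$ and $\lambda_0\in\R$ with $\Delta = \lambda_0(\varrho_0-\sigma_0)$. For $\Delta\neq 0$ necessarily $\lambda_0\neq 0$, so setting $\lambda = \lambda_0^{-1}$ and $\varrho = \sigma_0\in\prem^C$ we obtain $\varrho + \lambda\Delta = \varrho_0\in\prem$; conversely any such $\varrho,\lambda$ yield the required decomposition $\Delta = \lambda^{-1}(\varrho+\lambda\Delta) - \lambda^{-1}\varrho$ with $\varrho+\lambda\Delta\in\prem$ and $\varrho\in\prem^C$. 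Using once more the symmetry $\real(\prem-\prem^C) = \real(\prem^C-\prem)$, the roles of $\prem$ and $\prem^C$ may be interchanged, giving the statement as phrased: for every perturbation operator $\Delta$ there is a state $\varrho\in\prem$ and a $\lambda\in\R$ with $\varrho+\lambda\Delta\in\prem^C$.
\end{proof}
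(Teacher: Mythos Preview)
Your proof is correct and follows essentially the same route as the paper's: specialize Theorem~\ref{thm:learn} to two blocks, then rewrite $\Delta=\lambda'(\varrho-\varrho')$ as $\varrho+\lambda\Delta\in\prem^C$. The only cosmetic difference is that the paper sets $\lambda=-\lambda'^{-1}$ directly (keeping $\varrho\in\prem$ from the start), whereas you first land with $\varrho\in\prem^C$ and then invoke the symmetry $\R(\prem-\prem^C)=\R(\prem^C-\prem)$ to swap; both are fine, though the paper's choice is a shade more direct.
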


\begin{proof}
According to Theorem~\ref{thm:learn}, the necessity of informational completeness is equivalent to the condition $\R(\prem - \prem^C) =\tzero$, which means that any perturbation operator $\Delta$ can be written as $\Delta = \lam'(\varrho-\varrho')$ for some nonzero $\lam'\in\R$ and $\varrho\in\prem$, $\varrho'\in\prem^C$. 
Solving for $\varrho' =\varrho - \lambda'^{-1} \Delta$ and setting $\lam=-\lam'^{-1}$, the claim follows.
\end{proof}

\begin{figure}[!h]
\begin{center}
(a) \includegraphics[width=5.2cm]{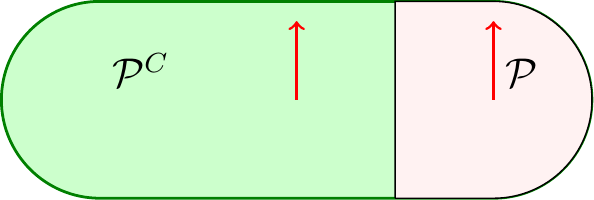}\hskip0.5cm (b) \includegraphics[width=5.2cm]{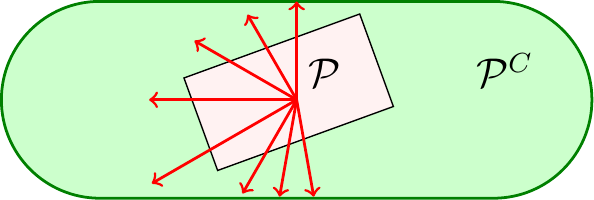}
\end{center}
\caption{\label{Fig:Data} 
Two partitions where (a) does not require informational completeness while (b) does.}
\end{figure}

The intuitive content of Proposition \ref{prop:learn-2} is that a membership problem $\state=\prem \cupdot \prem^C$ can be solved without an informationally complete measurement if and only if there is at least one direction such that no line parallel to it touches both $\prem$ and $\prem^C$. 
This is schematically depicted in Fig. \ref{Fig:Data}.

\subsection{Qubit membership problem with two blocks}

The simplest class of membership problems are those where the qubit state space is partitioned into two blocks, $\prem$ and $\prem^C$. 
 In the Bloch ball representation all qubit states are written as $\varrho_{\vr} = \half (\id + \vr \cdot \vsigma)$, $\vr \in \R^3$, $\no{\vr}\leq 1$, where $\vr \cdot \vsigma = r_x \sigma_x + r_y \sigma_y + r_z \sigma_z$.
Therefore, we can think of $\prem$ as a subset of the unit ball in $\R^3$. 
All perturbation operators are of the form $\Delta_{\va}=\va \cdot \vsigma$ for some nonzero $\va\in \R^3$.

Suppose that the membership problem $\state=\prem\cupdot \prem^C$ can be solved without an informationally complete measurement. By Proposition \ref{prop:learn-2}, this means that there exists a perturbation operator $\Delta_\va$ such that, if $\prem$ contains a state $\varrho_{\vr}$, then it must also contain all states $\varrho_{\vr} + \lam\Delta_\va = \varrho_{\vr + \lambda \va}$, and there are no other conditions. 
We have thus obtained the following geometric characterization of those qubit membership problems with two blocks that can be solved without an informationally complete measurement.

\begin{proposition}\label{prop:qubit}
A qubit membership problem $\state=\prem \cupdot \prem^C$ can be solved without an informationally complete measurement if and only if $\prem$ is an intersection of the Bloch ball with a family of parallel lines. 
\end{proposition}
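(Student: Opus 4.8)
The plan is to derive Proposition~\ref{prop:qubit} directly from Proposition~\ref{prop:learn-2}, working entirely in the Bloch ball picture. Recall that qubit states correspond bijectively to points of the unit ball $B = \{ \vr \in \R^3 : \no{\vr} \leq 1 \}$ via $\varrho_{\vr} = \half(\id + \vr\cdot\vsigma)$, and that perturbation operators are exactly the $\Delta_\va = \va\cdot\vsigma$ with $\vnull \neq \va \in \R^3$. Under this correspondence the affine operation $\varrho + \lambda\Delta_\va$ becomes the translation $\vr \mapsto \vr + \lambda\va$, so a ``line parallel to $\va$'' in operator language is literally a line in direction $\va$ in $\R^3$, restricted to $B$. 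I would identify $\prem$ with the subset $P \subseteq B$ it determines.

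First I would prove the ``if'' direction. Suppose $P = B \cap L$, where $L = \bigcup_{i} \ell_i$ is a union of lines all parallel to a common nonzero vector $\va$. Pick the perturbation operator $\Delta_\va$. If $\varrho_{\vr} \in \prem$, i.e. $\vr \in P$, then $\vr$ lies on one of the lines $\ell_i$; since $\ell_i$ has direction $\va$, every point $\vr + \lambda\va$ with $\vr + \lambda\va \in B$ again lies in $\ell_i \cap B \subseteq P$. Hence there is \emph{no} $\lambda \in \R$ with $\vr + \lambda\va \in B$ but $\varrho_{\vr + \lambda\va} \in \prem^C$; equivalently, the condition of Proposition~\ref{prop:learn-2} fails for this $\Delta$, so informational completeness is not needed.

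Next, the ``only if'' direction. Assume the membership problem can be solved without an informationally complete measurement. By Proposition~\ref{prop:learn-2} there is a perturbation operator $\Delta_\va$ (fix the corresponding nonzero $\va \in \R^3$) such that for \emph{no} state $\varrho \in \prem$ and \emph{no} $\lambda \in \R$ does $\varrho + \lambda\Delta_\va$ lie in $\prem^C$. Translating: for every $\vr \in P$ and every $\lambda \in \R$ with $\vr + \lambda\va \in B$, we have $\vr + \lambda\va \in P$. In other words, $P$ is closed under moving along the direction $\va$ inside $B$. This says precisely that $P$ is a union of the sets $(\vr + \R\va) \cap B$ over $\vr \in P$ — each such set is the intersection of $B$ with the line through $\vr$ parallel to $\va$ — so $P$ is an intersection of $B$ with a family of parallel lines (namely those lines through the points of $P$). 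Restating in operator terms gives the claim. The write-up should note, for cleanliness, that the empty case and the full-ball case are degenerate instances of ``a family of parallel lines'' (the empty family, respectively all lines in direction $\va$ meeting $B$).

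The only genuine subtlety — and hence the ``hard part,'' though it is mild here — is bookkeeping about what precisely counts as an ``intersection of the Bloch ball with a family of parallel lines'': one must be careful that $P$ is recovered as $\bigcup_i (\ell_i \cap B)$ rather than something smaller, and that the lines can be taken to pass through $B$. Since a line meeting $B$ intersects it in a (possibly degenerate) chord and the direction $\va$ is the same for all of them, the set-theoretic identity $P = B \cap \bigl(\bigcup_{\vr \in P}(\vr + \R\va)\bigr)$ is immediate from the closure property derived above, and no convexity or topology of $\prem$ is needed beyond what Proposition~\ref{prop:learn-2} already supplies.
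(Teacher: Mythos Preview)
Your proposal is correct and follows essentially the same route as the paper: both arguments invoke Proposition~\ref{prop:learn-2}, pass to the Bloch ball picture via $\varrho_{\vr}+\lambda\Delta_\va = \varrho_{\vr+\lambda\va}$, and identify the resulting closure-under-translation condition with being a union of parallel chords. You are simply more explicit than the paper (which compresses both directions into a single sentence and the remark ``there are no other conditions''), and your attention to the degenerate cases and the set-theoretic identity $P = B\cap\bigl(\bigcup_{\vr\in P}(\vr+\R\va)\bigr)$ is a welcome clarification rather than a different idea.
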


\noindent Two illustrative examples of Proposition \ref{prop:qubit} are depicted in Fig. \ref{Fig:Qubit}.

\begin{figure}[!htb]
\begin{center}
(a) \includegraphics[width=4.5cm]{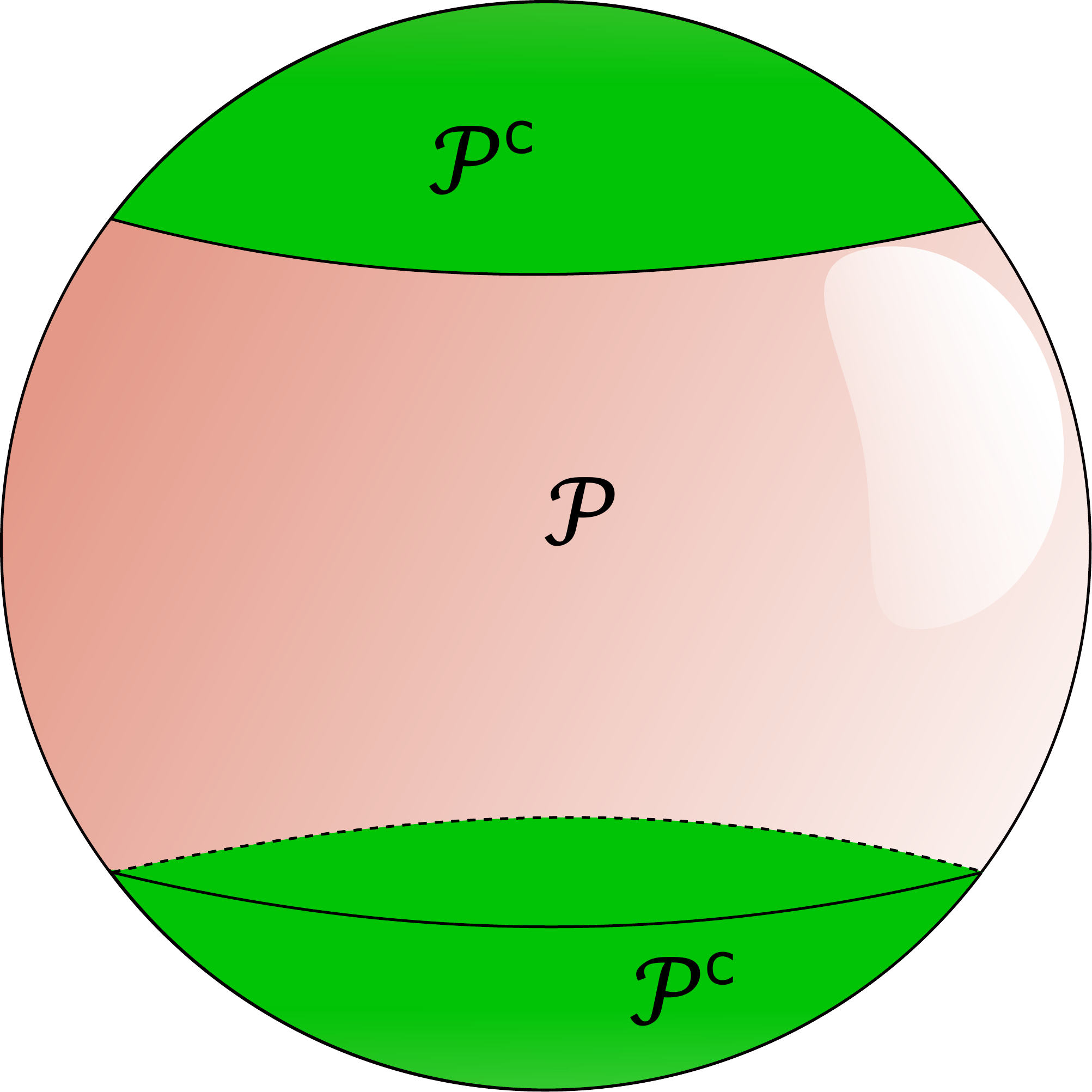}\hskip1cm (b) \includegraphics[width=4.5cm]{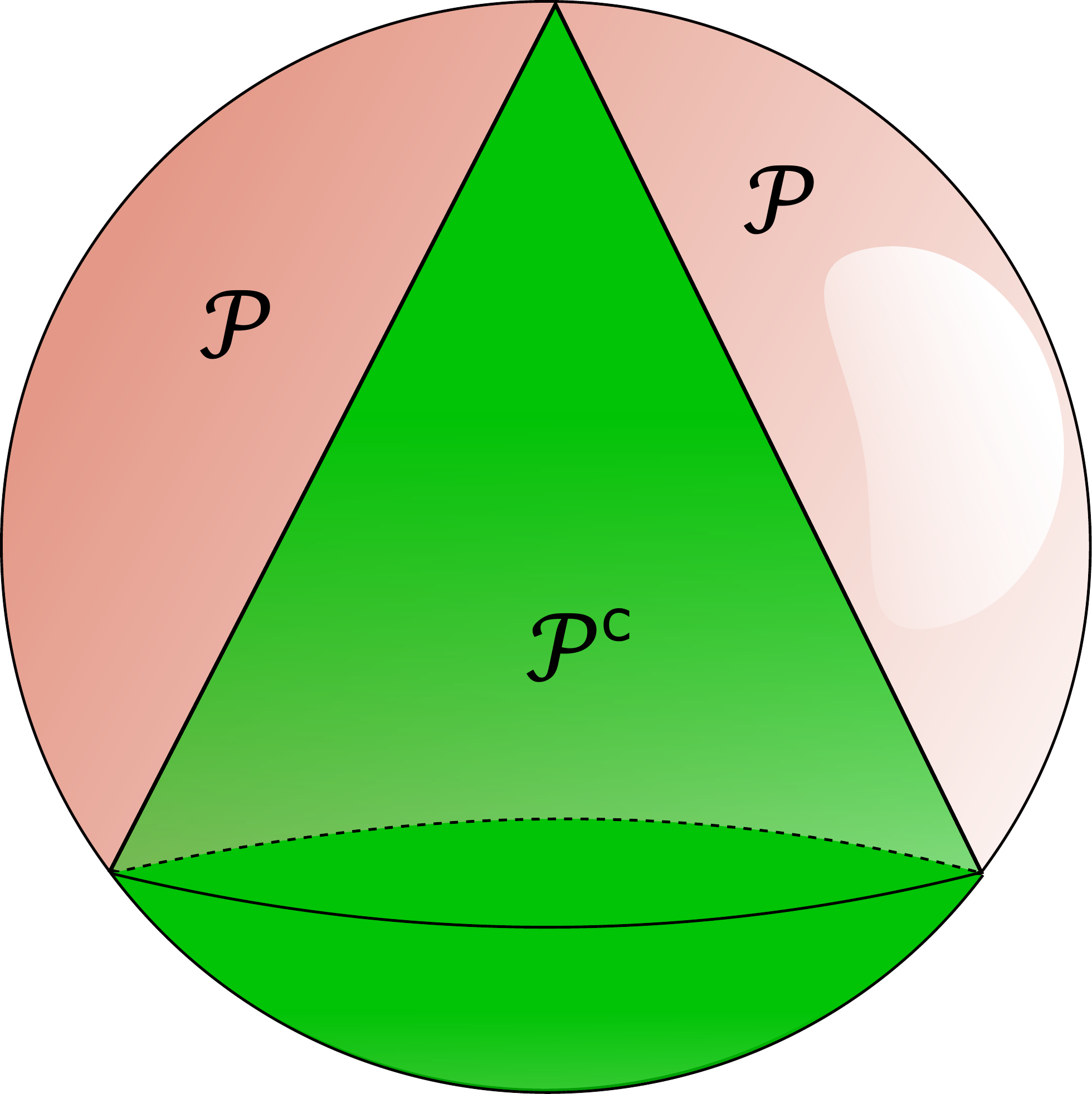}
\end{center}
\caption{\label{Fig:Qubit} Two Bloch ball partitions where the separations are formed with (a)  parallel lines and (b) non-parallel lines. The membership problem (a) can be solved without an informationally complete observable, while (b) cannot. }
\end{figure}

\subsection{Boundary criterion.}

A simple example of a membership problem that requires informational completeness is given by the partitioning of the state space into its interior and boundary points, $\state = {\rm Int}\, \state \cupdot \partial\state$. This is rather obvious from the geometric point of view: since the state space can be viewed as a convex body, any direction in it can be written as a difference of an interior point and a boundary point. 

This situation can be generalized a bit; it is enough that one of the blocks is contained in ${\rm Int}\, \state$. 
Recall that the interior ${\rm Int}\, \state$ consists exactly of states with full rank.

\begin{proposition}\label{prop:boundary}
Let $\state= \cupdot_j \prem_j$.
If one of the blocks $\prem_j$ contains only interior points of $\state$, then the membership problem cannot be solved without an informationally complete measurement.
\end{proposition}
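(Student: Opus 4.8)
The plan is to verify the criterion of Theorem~\ref{thm:learn}: I must show that $\bigcup_{j\neq k}\real(\prem_j-\prem_k)=\tzero$. Since the inclusion $\bigcup_{j\neq k}\real(\prem_j-\prem_k)\subseteq\tzero$ is automatic, only the reverse inclusion requires an argument, and this reduces to the following claim: for every perturbation operator $\Delta$ there are a block $\prem_k$ with $k\neq j_0$, a state $\varrho\in\prem_{j_0}$, a state $\varrho'\in\prem_k$, and a scalar $\lambda\in\R$ with $\Delta=\lambda(\varrho-\varrho')$. Here $\prem_{j_0}$ denotes the block that, by hypothesis, is contained in ${\rm Int}\,\state$.

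First I would fix a perturbation operator $\Delta$ (the case $\Delta=0$ being trivial) and pick an arbitrary state $\varrho\in\prem_{j_0}$. Since $\varrho$ has full rank it lies in the relative interior of the convex body $\state$; as the affine hull of $\state$ is the set of selfadjoint trace-one operators, its direction space is exactly $\tzero$, so the half-line $\{\varrho+t\Delta : t\geq 0\}$ lies in this affine hull and, for all sufficiently small $t>0$, in $\state$. I then set $b=\sup\{t\geq 0 : \varrho+t\Delta\in\state\}$. The key points, all elementary consequences of $\state$ being a compact convex set with $\varrho$ in its relative interior (cf.~\cite[Theorem 6.4]{Rockafellar}), are: $b>0$ because $\varrho$ is a relative interior point; $b<\infty$ because $\state$ is bounded; $\varrho':=\varrho+b\Delta\in\state$ because $\state$ is closed; and $\varrho'\notin{\rm Int}\,\state$, since otherwise $\varrho'+\varepsilon\Delta=\varrho+(b+\varepsilon)\Delta$ would still be in $\state$ for some $\varepsilon>0$, contradicting the definition of $b$. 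Hence $\varrho'\in\partial\state$.

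Now the hypothesis enters: since $\prem_{j_0}\subseteq{\rm Int}\,\state$, we have $\partial\state\cap\prem_{j_0}=\emptyset$, so the boundary state $\varrho'$ lies in some block $\prem_k$ with $k\neq j_0$. Writing $\Delta=b^{-1}(\varrho'-\varrho)$ exhibits $\Delta\in\real(\prem_k-\prem_{j_0})\subseteq\bigcup_{j\neq k}\real(\prem_j-\prem_k)$, which is what was needed. Note that taking $\prem_{j_0}={\rm Int}\,\state$ recovers the introductory example $\state={\rm Int}\,\state\cupdot\partial\state$.

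I do not anticipate a genuine obstacle here: the whole argument is the standard ``push a relative interior point to the boundary along a fixed direction'' construction in a convex body. The only thing deserving explicit care is the chain of elementary facts about $b$ in the previous paragraph --- in particular the identification of ${\rm Int}\,\state$ with the relative interior (already recorded in Section~\ref{sec:prelim}) and the fact that the endpoint $\varrho+b\Delta$ necessarily lands on $\partial\state$ --- since the proposition rests entirely on these.
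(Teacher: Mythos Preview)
Your argument is correct. Conceptually it is the same strategy as the paper's proof---take a full-rank state $\varrho\in\prem_{j_0}$ and slide it along the direction $\Delta$ until it hits $\partial\state$---but the implementations differ. The paper computes the boundary endpoint explicitly via spectral calculus: with $\lam_{\rm min}$ the smallest eigenvalue of $\sqrt{\varrho^{-1}}\Delta\sqrt{\varrho^{-1}}$ (necessarily negative since $\Delta$ is traceless and nonzero), one checks that $\varrho_2=\varrho-\lam_{\rm min}^{-1}\Delta$ is a state with $0$ in its spectrum, hence on $\partial\state$. Your $b=\sup\{t\geq 0:\varrho+t\Delta\in\state\}$ is in fact exactly $-\lam_{\rm min}^{-1}$, so the two proofs land on the same boundary point. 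Your route trades the operator-theoretic computation for the standard convex-body fact that a ray from a relative interior point exits through the relative boundary; this is slightly more elementary and would survive verbatim in any finite-dimensional convex-body setting, while the paper's version has the advantage of being fully constructive.
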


\begin{proof}
Suppose $\prem_j\subseteq {\rm Int}\,\state$. 
Let $\Delta$ be a perturbation operator and let $\varrho_1\in\prem_j$.  
We denote by $\lam_{\rm min}$ the smallest eigenvalue of the operator $\sqrt{\varrho_1^{-1}}\Delta\sqrt{\varrho_1^{-1}}$ and define 
\begin{align*}
\varrho_2 =\sqrt{\varrho_1}(\id-\lam_{\rm min}^{-1}\sqrt{\varrho_1^{-1}}\Delta\sqrt{\varrho_1^{-1}})\sqrt{\varrho_1}.
\end{align*}
Since $\lam_{\rm min}<0$, the largest eigenvalue of the operator $\lam_{\rm min}^{-1}\sqrt{\varrho_1^{-1}}\Delta\sqrt{\varrho_1^{-1}}$ is $1$. 
It follows that \(\id-\lam_{\rm min}^{-1}\sqrt{\varrho_1^{-1}}\Delta\sqrt{\varrho_1^{-1}}\) is a positive operator with the smallest eigenvalue being \(0\). Since \(\tr{\varrho_2}=1\),  the operator $\varrho_2$ is a state; moreover, $\varrho_2$ has \(0\) as an eigenvalue, hence it is in the boundary $\partial\state$ of $\state$. In particular, since $\prem_j$ contains only interior points, we have $\varrho_2\in\prem_k$ for some $k\neq j$. 
Since $\lam_{\rm min}(\varrho_1 - \varrho_2) = \Delta$, the claim follows by Theorem \ref{thm:learn}.
\end{proof}

\subsection{Strictly convex block.}

In general, the convexity of a block $\prem$ has no implications on the corresponding membership problem $\state=\prem\cupdot\prem^C$. 
On the one hand, by cutting the full state space with a hyperplane, we can divide the states into two convex blocks such that the membership problem does not require informational completeness. For instance, in the qubit case we can set $\prem = \{\varrho_{\vr} \in\state : r_z\leq 0\}$ so that the membership problem can be solved simply by measuring $\sigma_z$. On the other hand, by taking any convex subset contained in the interior of the state space we have a membership problem which, by Proposition \ref{prop:boundary}, cannot be solved without informational completeness. 
Therefore, we need to have a stronger assumption than the convexity in order to conclude something about the corresponding membership problem $\state=\prem\cupdot\prem^C$.

For a convex subset $\prem\subset \state$, the interior ${\rm Int}\, \prem$ is defined similarly as for the whole state space $\state$, that is, as the interior in the affine hull of \(\prem\).
In contrast, in the definition of the boundary the complement is taken with respect to $\prem$ rather than the full state space. In other words, we set $\partial\prem = \prem \setminus {\rm Int }\, \prem$. 
We remark that this definition of the boundary slightly differs from the usual one of e.g.~\cite[Section 6]{Rockafellar}. 
In particular, a boundary in our sense may be empty, such as in the case $\prem = {\rm Int}\, \state$.

We say that a convex subset $\prem \subset \state$ is \emph{strictly convex}  if, for all $\varrho,\sigma \in \prem$, $\varrho \neq \sigma$, and $0 < t < 1$, the mixture $t\varrho + (1-t) \sigma$ is in the interior of $\prem$. 
This is still not strong enough property to have conclusive implications for a membership problem. Namely, a strictly convex subset in the interior obviously requires informational completeness by Proposition \ref{prop:boundary}, but in general this need not be the case. As an example, suppose that $\dim \, \hi\geq 3$ and decompose the Hilbert space as a direct sum  $\hi = \hi_1\oplus\hi_2$ with $\dim\, \hi_2 =2$. For the subset $\prem$, take those states whose support is contained in $\hi_2$. Then $\prem\simeq \mathcal{S} (\hi_2)$ so that it is strictly convex, but the  membership problem $\state=\prem\cupdot\prem^C$ can be solved without informational completeness. In fact, the binary POVM $E_1=P$, $E_2=\id - P$, where $P$ is the projection onto $\hi_1$, works in this case. The essential point here is that $\prem$, or more specifically its boundary, does not contain any points from the interior of the state space. By adding this extra assumption, we obtain the following result.

\begin{proposition}\label{prop:strictly_convex_set}
Let $\prem \subset \state$ be a strictly convex set such that the boundary $\partial\prem$  contains an interior point of $\state$. 
Then the membership problem $\state=\prem \cupdot \prem^C$ cannot be solved without an informationally complete measurement.
\end{proposition}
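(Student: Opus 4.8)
The plan is to reduce the statement to the criterion of Proposition~\ref{prop:learn-2}: it suffices to show that for every perturbation operator $\Delta$ there exist a state $\varrho \in \prem$ and a scalar $\lambda \in \R$ with $\varrho + \lambda\Delta \in \prem^C$. Fix such a $\Delta$, and let $\varrho_0$ be a state witnessing the hypothesis, i.e.\ $\varrho_0 \in \partial\prem$ while at the same time $\varrho_0 \in {\rm Int}\,\state$. The natural choice is to take $\varrho = \varrho_0$ and to perturb along $\pm\Delta$.

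First I would use that $\varrho_0 \in {\rm Int}\,\state$, hence has full rank, to pin down an admissible step size: if $\lam_{\rm min}$ is the smallest eigenvalue of $\varrho_0$, then $\varrho_0 + t\Delta \geq (\lam_{\rm min} - \mo{t}\,\no{\Delta})\id$, and since $\Delta$ is traceless this shows $\varrho_0 + t\Delta \in \state$ (in fact in ${\rm Int}\,\state$) whenever $\mo{t} < \lam_{\rm min}/\no{\Delta}$. Fix one such value $\epsilon > 0$. Then $\varrho_0 + \epsilon\Delta$ and $\varrho_0 - \epsilon\Delta$ are both states, and they are distinct because $\Delta \neq 0$.

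The decisive step is then a short dichotomy. If at least one of $\varrho_0 + \epsilon\Delta$, $\varrho_0 - \epsilon\Delta$ lies in $\prem^C$, we are done with $\lambda = +\epsilon$ or $\lambda = -\epsilon$ accordingly. Otherwise both belong to $\prem$; being two distinct points of $\prem$, strict convexity applied with $t = \tfrac12$ forces their mixture $\tfrac12(\varrho_0 + \epsilon\Delta) + \tfrac12(\varrho_0 - \epsilon\Delta) = \varrho_0$ to lie in ${\rm Int}\,\prem$, contradicting $\varrho_0 \in \partial\prem = \prem \setminus {\rm Int}\,\prem$. Hence the second alternative is impossible, the first always holds, and since $\Delta$ was an arbitrary perturbation operator, Proposition~\ref{prop:learn-2} gives the claim.

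I do not expect a genuine obstacle: once the witness $\varrho_0 \in \partial\prem \cap {\rm Int}\,\state$ is in hand, the argument is essentially a one-line use of strict convexity. The only point deserving care is the choice of $\epsilon$, where the full-rank property of $\varrho_0$ is used in an essential way — it is precisely what lets us perturb in both directions $\pm\Delta$ without leaving $\state$, and precisely what fails for the subset $\prem \simeq \mathcal{S}(\hi_2)$ discussed just before the statement, whose boundary meets only $\partial\state$.
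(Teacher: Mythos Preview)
Your proposal is correct and follows essentially the same route as the paper's proof: pick $\varrho_0 \in \partial\prem \cap {\rm Int}\,\state$, perturb by $\pm\epsilon\Delta$ inside $\state$, and use strict convexity at $t=\tfrac12$ to rule out the case where both perturbed states lie in $\prem$. The only difference is that you spell out the choice of $\epsilon$ explicitly via the minimal eigenvalue of $\varrho_0$, whereas the paper just asserts such a $\lambda$ exists.
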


\begin{proof}
We apply Proposition \ref{prop:learn-2}. Let $\Delta$ be a perturbation operator. 
Fix $\varrho \in \partial\prem \cap \mathrm{Int}\, \state$ and a nonzero $\lambda \in\real$ which is small enough so that $\varrho \pm \lambda \Delta \in \state$. If one of these states is in $\prem^C$ the claim follows, so we assume that $\varrho \pm\lambda \Delta \in \prem$. But then
\(
\varrho = \half (\varrho + \lambda \Delta) + \half (\varrho - \lambda \Delta) \, 
\)
which, by the strict convexity, contradicts the fact that $\varrho$ is on the boundary of $\prem$. Therefore, one of the states must be in $\prem^C$.
\end{proof}

The above result is local in nature. Indeed, the strict convexity assumption in Proposition \ref{prop:strictly_convex_set} can be relaxed as follows.

\vspace{0.5cm}

\begin{minipage}{0.4\textwidth}
\begin{center}
\includegraphics[height=0.85\linewidth]{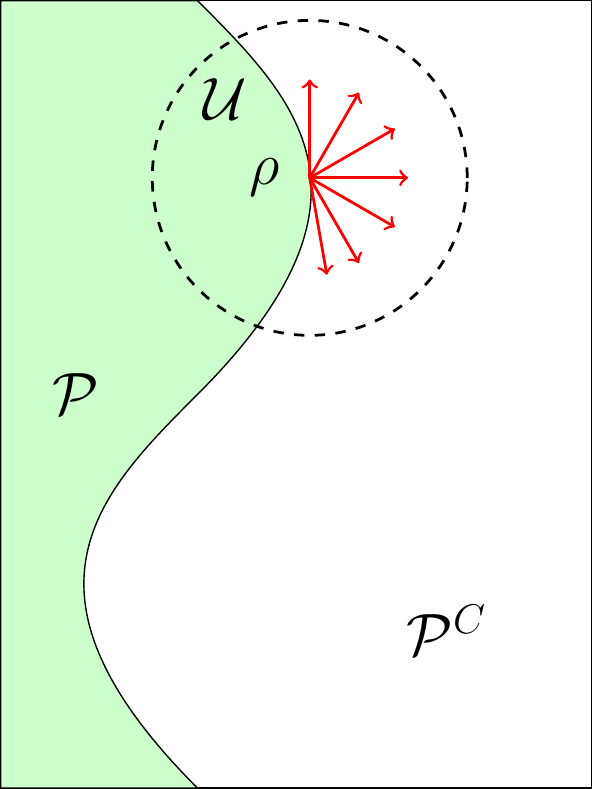}
\captionof{figure}{} \label{fig4}
\end{center}
\end{minipage}
\begin{minipage}{0.55\textwidth}
\begin{proposition}\label{prop:locally_strictly_convex_set}
Let $\prem\subset\state$, and suppose that
\begin{enumerate}
\item there exists an open set ${\mathcal U}\subset{\rm Int}\,\state$ such that the intesection $\prem\cap{\mathcal U}$ is strictly convex;
\item the boundary $\partial(\prem\cap{\mathcal U})$ is nonempty.
\end{enumerate}
Then the membership problem $\state = \prem\cupdot\prem^C$ cannot be
solved without an informationally complete measurement.
\end{proposition}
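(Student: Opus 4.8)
The plan is to mimic the proof of Proposition~\ref{prop:strictly_convex_set}, but carried out \emph{inside} the open set $\mathcal U$ so that the local strict convexity of $\prem\cap\mathcal U$ is all that is used. First I would invoke Proposition~\ref{prop:learn-2}: it suffices to show that for an arbitrary perturbation operator $\Delta$ there is a state $\varrho$ and a scalar $\lambda$ with $\varrho\in\prem$ and $\varrho+\lambda\Delta\in\prem^C$ (or the symmetric statement). By assumption (2) we may fix a point $\varrho\in\partial(\prem\cap\mathcal U)$; note $\varrho\in\prem\cap\mathcal U\subseteq\prem$, and since $\mathcal U\subseteq{\rm Int}\,\state$ the state $\varrho$ has full rank, so small perturbations of it stay in $\state$.

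Next, because $\mathcal U$ is open and $\varrho\in\mathcal U$, I can choose a nonzero $\lambda\in\real$ small enough that \emph{both} $\varrho+\lambda\Delta$ and $\varrho-\lambda\Delta$ lie in $\mathcal U$ (and in $\state$, which is automatic for $\lambda$ small since $\varrho$ is interior). If either of these two states lies in $\prem^C$, we are done. Otherwise both lie in $\prem$, hence in $\prem\cap\mathcal U$, and then
\[
\varrho=\tfrac12(\varrho+\lambda\Delta)+\tfrac12(\varrho-\lambda\Delta)
\]
exhibits $\varrho$ as a proper convex combination of two distinct points of the strictly convex set $\prem\cap\mathcal U$, forcing $\varrho\in{\rm Int}\,(\prem\cap\mathcal U)$ — contradicting $\varrho\in\partial(\prem\cap\mathcal U)$. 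Therefore one of $\varrho\pm\lambda\Delta$ is in $\prem^C$, and Proposition~\ref{prop:learn-2} gives the conclusion.

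The only point that needs a little care — and the closest thing to an obstacle — is ensuring that the perturbed states stay inside $\mathcal U$, not merely inside $\state$: the strict convexity hypothesis is only assumed for $\prem\cap\mathcal U$, so the convex-combination argument must take place there. This is handled by openness of $\mathcal U$ together with $\varrho\in\mathcal U$, which guarantees a whole segment through $\varrho$ in direction $\Delta$ sits in $\mathcal U$ for small parameter values; the fact that $\mathcal U\subseteq{\rm Int}\,\state$ then makes the containment in $\state$ free of charge. Everything else is identical in spirit to the strictly convex case, with ``boundary of $\prem$'' replaced throughout by ``boundary of $\prem\cap\mathcal U$''. (Figure~\ref{fig4} illustrates the situation: the dangerous configuration, where every line through a boundary point of $\prem$ in some fixed direction touches $\prem^C$, is ruled out locally by the lens-shaped strictly convex patch.)
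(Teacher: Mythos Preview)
Your proof is correct and follows essentially the same approach as the paper's: pick $\varrho\in\partial(\prem\cap\mathcal U)$, use openness of $\mathcal U$ to get $\varrho\pm\lambda\Delta\in\mathcal U$ for small $\lambda$, and derive a contradiction from strict convexity if both perturbed states lie in $\prem$. The paper's version is terser (it simply refers back to the proof of Proposition~\ref{prop:strictly_convex_set}), but the content is identical, including your observation that the key care point is keeping the perturbed states inside $\mathcal U$ rather than merely inside $\state$.
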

\end{minipage}

\begin{proof}
Let $\rho\in\partial(\prem\cap{\mathcal U})$. Since ${\mathcal U}$ is open, for any perturbation operator $\Delta$ there exists a suitably small nonzero $\lam\in\R$ such that $\rho\pm\lam\Delta\in{\mathcal U}$. As in the previous proof, we assume that $\rho\pm\lam\Delta\in\prem$, and then we obtain the contradiction $\rho\notin\partial(\prem\cap{\mathcal U})$ by strict convexity of $\prem\cap{\mathcal U}$.
\end{proof}

The geometrical content of proposition \ref{prop:locally_strictly_convex_set}  is illustrated by the set $\prem$ in figure \ref{fig4}.

\subsection{Sublevel set of a strictly convex function.}

We now consider another instance which captures the previously described idea of strict convexity. 
For any function $f:\state\to\R$ and any $\varepsilon\in\R$, we denote
\begin{equation*}
\state^{\leq\varepsilon}_f = \{\varrho\in\state : f(\varrho)\leq\varepsilon\} \, , \quad \state^{> \varepsilon}_f = \{\varrho\in\state : f(\varrho)>\varepsilon\} \, ,
\end{equation*}
and we consider the membership problem $\state=\state^{\leq\varepsilon}_f \cupdot \state^{> \varepsilon}_f$ for a specific class of functions.

A function $f:\state\to\R$ is called {\em strictly convex} if, for all $\varrho_1 \neq \varrho_2$ and $0 < t < 1$, we have
\begin{equation}\label{eqn:strictly_convex}
f(t \varrho_1 + (1-t) \varrho_2) < t f(\varrho_1) + (1-t) f(\varrho_2) \, .
\end{equation}
In the statement of Proposition \ref{prop:strict_convex} below the seemingly weaker hypothesis of {\em strict mid-point convexity} will suffice, which requires that the inequality \eqref{eqn:strictly_convex} holds for $t=\frac{1}{2}$. In the case of convexity (i.e., with $\leq$ instead of $<$) it is well-known that for continuous functions these two notions coincide \cite[Proposition 1.3 and the subsequent remark]{Simon2011}. Actually, it can be proved that the same is true for strict convexity.

The sublevel sets of a strictly convex continuous function defined on a strictly convex domain are themselves strictly convex. One might then be tempted to think that Proposition~\ref{prop:strictly_convex_set} implies that the membership problem $\state=\state^{\leq\varepsilon}_f \cupdot \state^{> \varepsilon}_f$ cannot be solved without an informationally complete measurement in the case of strictly convex functions. However,  the domain $\state$ {\em is not} strictly convex, so we can not conclude that the sublevel sets are strictly convex in our setting (see e.g.~Section \ref{sec:specific}.\ref{subsec:specificB} below). 
Actually, a moment's thought shows why this may not happen. For instance, consider the function \(f\colon \R^2\to \R\,,\, f(x,y)=x^2+y^2\). It is strictly convex, but, if we restrict it to the unit square $[0,1]\times[0,1]$, its sublevel sets are not. The point here is that the domain $[0,1]\times[0,1]$ is not strictly convex, hence its boundary contains nontrivial convex pieces, and such pieces can be included in the sublevel sets of $f$.

We now proceed to prove our result concerning the membership problem for the sublevel set of a strictly convex continuous function. 

\begin{proposition}\label{prop:strict_convex}
Let $f:\state\to\R$ be a strictly mid-point convex continuous function on \(\state\). If $\varepsilon\in\R$ is such that $\min f<\varepsilon<\max f$, then 
the membership problem $\state=\state^{\leq\varepsilon}_f \cupdot \state^{> \varepsilon}_f $ cannot be solved without an informationally complete measurement.
\end{proposition}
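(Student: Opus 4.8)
The plan is to verify the criterion of Proposition~\ref{prop:learn-2}: given an arbitrary perturbation operator $\Delta$, I must produce a state $\varrho\in\state^{\leq\varepsilon}_f$ and a scalar $\lambda\in\real$ with $\varrho+\lambda\Delta\in\state^{>\varepsilon}_f$. The natural object to look at is the real-valued function $g(t)=f(\varrho_0+t\Delta)$ defined on the (compact) interval $I=\{t\in\real:\varrho_0+t\Delta\in\state\}$, where $\varrho_0$ is a conveniently chosen starting state. Strict mid-point convexity of $f$, together with continuity, makes $g$ a strictly convex continuous function of one real variable on $I$; in particular $g$ attains its minimum at a unique interior point (unless $\Delta$ is tangent in a degenerate way, which I will handle by the choice of $\varrho_0$), and it is strictly monotone on each side of that minimizer.

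First I would fix a state $\sigma_{\min}$ with $f(\sigma_{\min})=\min f$ and a state $\sigma_{\max}$ with $f(\sigma_{\max})=\max f$; these exist by continuity and compactness of $\state$, and the hypothesis $\min f<\varepsilon<\max f$ guarantees $\sigma_{\min}\in{\rm Int}\,\state^{\leq\varepsilon}_f$ in the sense that $f(\sigma_{\min})<\varepsilon$, while $f(\sigma_{\max})>\varepsilon$. The idea is to choose the line through $\sigma_{\min}$ in direction $\Delta$: set $\varrho_0=\sigma_{\min}$ and consider $g(t)=f(\sigma_{\min}+t\Delta)$ on the interval $I\ni 0$. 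Since $g(0)=\min f<\varepsilon$, if $g$ exceeds $\varepsilon$ somewhere on $I$ we are done, because then $\sigma_{\min}\in\state^{\leq\varepsilon}_f$ and $\sigma_{\min}+t\Delta\in\state^{>\varepsilon}_f$ for that $t$, which is exactly the conclusion of Proposition~\ref{prop:learn-2} with $\varrho=\sigma_{\min}$, $\lambda=t$.

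The remaining case is that $g(t)\leq\varepsilon$ for all $t\in I$, i.e. the whole segment $\state\cap(\sigma_{\min}+\real\Delta)$ lies in $\state^{\leq\varepsilon}_f$. Here I would instead run the line through $\sigma_{\max}$: by the same reasoning we are done unless the segment $\state\cap(\sigma_{\max}+\real\Delta)$ lies entirely in $\state^{>\varepsilon}_f$, but that segment contains $\sigma_{\max}$ with $f(\sigma_{\max})>\varepsilon$ — and now I use strict convexity of $f$ along this line together with the fact that $g$ restricted to a segment of a convex body must attain its minimum at an endpoint or have a unique interior minimizer: if $f>\varepsilon$ on the whole $\Delta$-segment through $\sigma_{\max}$, I can take a further point, repeat, and since $\state$ is compact and the level set $\{f=\varepsilon\}$ separates $\sigma_{\min}$ from $\sigma_{\max}$ (being sandwiched strictly between $\min f$ and $\max f$), an intermediate-value/connectivity argument along a path from $\sigma_{\min}$ to $\sigma_{\max}$ produces a segment in direction $\Delta$ that does cross the level $\varepsilon$. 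More cleanly: the sets $A=\{\varrho\in\state: \state\cap(\varrho+\real\Delta)\subseteq\state^{\leq\varepsilon}_f\}$ and $B=\{\varrho\in\state: \state\cap(\varrho+\real\Delta)\subseteq\state^{>\varepsilon}_f\}$ are, respectively, closed and open (by continuity of $f$ and compactness of each $\Delta$-segment, using strict convexity to rule out the segment being tangent to the level set at an interior point so that membership is stable), $\sigma_{\min}\in A$ could only fail the desired conclusion if $\sigma_{\min}\in A$ and likewise $\sigma_{\max}$; since $\state$ is connected and $A,B$ cannot both be everything, some $\Delta$-line meets both $\state^{\leq\varepsilon}_f$ and $\state^{>\varepsilon}_f$, which is what we want.

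The main obstacle I anticipate is the degenerate possibility that \emph{every} line in the fixed direction $\Delta$ has $f$ constant on it or touches the level set $\{f=\varepsilon\}$ only tangentially from one side, so that crossing never happens; this is precisely where strict convexity (not just convexity) of $f$ is essential. The clean way to exploit it: on any segment $\state\cap(\varrho+\real\Delta)$, strict convexity forces $g(t)=f(\varrho+t\Delta)$ to be strictly convex, hence it cannot be $\equiv\varepsilon$ on a subinterval, and wherever $g(t^\ast)=\varepsilon$ with $t^\ast$ interior to $I$ and $g$ not attaining its minimum there, $g$ takes values both below and above $\varepsilon$ near $t^\ast$. I will need to argue that for at least one base point the relevant segment has its $f$-minimum strictly below $\varepsilon$ while some point on it has $f$ strictly above $\varepsilon$; the choice $\varrho_0=\sigma_{\min}$ secures the first inequality for free, so the whole proof reduces to showing the $\Delta$-segment through $\sigma_{\min}$ is not entirely contained in $\state^{\leq\varepsilon}_f$, and if it is, transporting the base point continuously toward $\sigma_{\max}$ until the segment first fails to be contained, where continuity and strict convexity give the crossing. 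Writing this transport argument carefully — ensuring the "first failure" point genuinely yields a segment meeting $\state^{>\varepsilon}_f$ rather than merely touching its closure — is the delicate step.
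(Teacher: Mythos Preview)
Your approach differs substantially from the paper's, and the connectivity/transport argument you sketch has a real gap beyond the ``delicate step'' you flag. The claim that $A$ closed, $B$ open, and $\state$ connected forces $A\cup B\neq\state$ is not valid: a connected space can certainly be partitioned into a nonempty closed set and its open complement (e.g.\ $[0,1]=[0,\tfrac12]\cup(\tfrac12,1]$). To extract a contradiction you would need $A$ (or $B$) to be clopen, and strict convexity of $f$ does not deliver this: if the maximum of $f$ on a segment $J(\varrho)$ equals exactly $\varepsilon$ and is attained at an endpoint of $J(\varrho)$---necessarily a point of $\partial\state$---then perturbing $\varrho$ can push that endpoint's $f$-value above $\varepsilon$, so $A$ is not open. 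A related difficulty is that $\sigma_{\min}$ and $\sigma_{\max}$ may lie on $\partial\state$, in which case the segment $\state\cap(\sigma_{\min}+\real\Delta)$ can degenerate to the single point $\{\sigma_{\min}\}$ (take $d\geq 3$, $\sigma_{\min}$ pure, and $\Delta$ supported on its orthogonal complement), so your opening move is vacuous. All of these issues stem from the boundary of $\state$.

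The paper avoids the boundary entirely with one idea you are missing: it first produces a \emph{full-rank} state $\varrho_\ast$ with $f(\varrho_\ast)=\varepsilon$. To do this, pick any interior state $\varrho_0\in{\rm Int}\,\state$; it lies in one of the two blocks, and the line segment from $\varrho_0$ to a state in the other block consists of full-rank states except possibly at the far endpoint, so the intermediate value theorem yields $\varrho_\ast\in{\rm Int}\,\state$ with $f(\varrho_\ast)=\varepsilon$. Now, for \emph{every} perturbation $\Delta$, both $\varrho_\ast\pm\lambda\Delta$ lie in $\state$ for small $\lambda\neq 0$, and strict mid-point convexity forbids both from lying in $\state^{\leq\varepsilon}_f$ (else their midpoint $\varrho_\ast$ would satisfy $f(\varrho_\ast)<\varepsilon$). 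Hence one of them is in $\state^{>\varepsilon}_f$ while $\varrho_\ast\in\state^{\leq\varepsilon}_f$, and Proposition~\ref{prop:learn-2} finishes. The entire argument is a few lines once the interior level point is in hand; your direction-by-direction transport may be salvageable, but it fights $\partial\state$ at every step, whereas working from ${\rm Int}\,\state$ makes the strict-convexity inequality do all the work uniformly in $\Delta$.
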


Clearly, if the function $f:\state\to\R$ is {\em strictly mid-point concave}, that is, $-f$ is strictly mid-point convex, the previous proposition can be restated for the membership problem $\state = \state^{\geq\varepsilon}_f \cupdot \state^{< \varepsilon}_f$. Note that in this case the sublevel set $\state^{\leq\varepsilon}_f$ is replaced by $\state^{< \varepsilon}_f = \{\varrho\in\state : f(\varrho)<\varepsilon\}$.

\begin{proof}
Let $\varrho_{\rm min},\varrho_{\rm max}\in\state$ be such that $f(\varrho_{\rm min}) = \min f$ and $f(\varrho_{\rm max}) = \max f$. Clearly, $\varrho_{\rm min}\in\state^{\leq\varepsilon}_f$ and $\varrho_{\rm max}\in\state^{>\varepsilon}_f$, hence the sets $\state^{\leq\varepsilon}_f$ and $\state^{>\varepsilon}_f$ are nonempty. As $\state^{\leq\varepsilon}_f\cup\state^{>\varepsilon}_f = \state$, at least one of the two sets $\state^{\leq\varepsilon}$ or $\state^{>\varepsilon}$ contains a full rank state $\varrho_0$. If $\varrho_0\in\state^{\leq\varepsilon}$, then, choosing an arbitrary $\varrho_1\in\state^{>\varepsilon}$, for all $t\in [0,1)$ the state $\varrho_t = t\varrho_1 + (1-t)\varrho_0$ has full rank. By continuity of the map $f$ and the intermediate value theorem, there exists $\bar{t}\in [0,1)$ such that $f(\varrho_{\bar{t}}) = \varepsilon$. Otherwise, if $\varrho_0\in\state^{>\varepsilon}$, then for all $t\in [0,1)$ the state $\varrho_t = t\varrho_{\rm min} + (1-t)\varrho_0$ has full rank, and continuity of $f$ and the intermediate value theorem again imply the existence of $\bar{t}\in [0,1)$ such that $f(\varrho_{\bar{t}}) = \varepsilon$. In both cases, we obtain a full rank state $\varrho_{\bar{t}}$ such that $f(\varrho_{\bar{t}}) = \varepsilon$.

For any perturbation operator $\Delta$, for sufficiently small nonzero $\lam\in\R$ we then have $\varrho_{\bar{t}}\pm \lam\Delta\in\state$. 
The remaining thing is to observe that at least one of the states $\varrho_{\bar{t}}+\lam\Delta$ or $\varrho_{\bar{t}}-\lam\Delta$ is in $\state^{>\varepsilon}_f$; indeed, since $\varrho_{\bar{t}}\in\state^{\leq\varepsilon}_f$, by Proposition \ref{prop:learn-2} this is enough to prove the thesis.
This is shown by contradiction: if $\varrho_{\bar{t}}\pm \lam\Delta\in\state^{\leq\varepsilon}_f$, then by the strict mid-point convexity we get
$$
f(\varrho_{\bar{t}}) = f\left(\frac{1}{2}(\varrho_{\bar{t}}+\lam\Delta) + \frac{1}{2}(\varrho_{\bar{t}}-\lam\Delta)\right) < \frac{1}{2} f(\varrho_{\bar{t}}+\lam\Delta) + \frac{1}{2} f(\varrho_{\bar{t}}-\lam\Delta) \leq f(\varrho_{\bar{t}}) \,.
$$
\end{proof}

As an example of the usage of Proposition \ref{prop:strict_convex}, consider the problem of deciding if a qubit state is close to a fixed reference state $\varrho_\vr $ in trace distance.
To apply Proposition \ref{prop:strict_convex}, we choose for the function $f$ the square of the trace distance, $f(\varrho) = \Vert \varrho-\varrho_\vr\Vert_1^2$. The reason for taking the square is that while the trace distance is not strictly convex, its square is. To see this, note that the trace distance between two qubit states can be expressed in the Bloch representation as $\no{ \varrho_{\va} - \varrho_{\vr} }_1 = \no{ \va - \vr }$. Since the Euclidian norm satisfies the parallellogram law, we have that for any two distinct states $\varrho_\va$ and $\varrho_\vb$,
\begin{align*}
f\left( \half\varrho_{\va}  + \half\varrho_{\vb} \right) &= \no{ \half (\varrho_\va -\varrho_\vr) + \half(\varrho_\vb - \varrho_\vr) }_1^2  = \no{\half(\va-\vr) + \half(\vb-\vr) }^2\\
&=  2\no{\half(\va-\vr)}^2 + 2\no{\half(\vb-\vr) }^2  - \no{\half(\va-\vr) - \half(\vb-\vr) }^2 \\
&= \half\no{\va-\vr}^2 + \half\no{\vb-\vr}^2  - \no{\half(\va-\vb) }^2 \\
& < \half f(\varrho_\va) + \half f(\varrho_\vb),
\end{align*}
so that $f$ is stricly mid-point convex. The $\varepsilon$-sublevel sets for the trace distance are just $\state^{\leq \varepsilon^2}_f$, so we can conclude that \emph{answering the membership problem as to whether or not a state is close to a fixed qubit state in trace norm requires informational completeness}. 

\section{Specific membership problems}\label{sec:specific}

\subsection{Norm distance membership problem.}\label{subsec:specificA}

Earlier we have seen that the trace distance membership problem for a qubit cannot be solved without an informationally complete measurement. 
The key observation was that the trace distance between states reduces to the Euclidean distance between the corresponding Bloch vectors, so that the parallellogram law holds. In a real vector space, a norm satisfies the parallellogram law if and only if it comes from an inner product. 
For this reason, we study the norm distance membership problem for the Hilbert-Schmidt norm. 

To this end, let $\sigma\in\state$ be a fixed reference state, and define $f:\state\to\R$, $f(\rho) = \Vert \rho-\sigma \Vert_2^2$. Using the parallellogram law and making the same calculation as before, we then have, for any $\varrho_1,\varrho_2\in\state$,
\begin{align*}
f\left( \half\varrho_1 + \half\varrho_2 \right) & 
 < \half f(\varrho_1) + \half f(\varrho_2)
\end{align*}
so that $f$ is strictly mid-point convex. Proposition~\ref{prop:strict_convex} then immediately tells us that whenever the two sets $\state^{\leq \varepsilon^2}_f\equiv B_\varepsilon(\sigma)$ and $\state^{>\varepsilon^2}_f\equiv B_\varepsilon(\sigma)^C$ are both nonempty, the membership problem $\state=B_\varepsilon(\sigma) \cupdot B_\varepsilon(\sigma)^C$ cannot be solved without informational completeness. Since the square of the norm is not relevant for the geometry of the sublevel sets, we can conclude that \emph{for any $0<\varepsilon <\max_{\varrho\in\state} \no{\rho-\sigma}_2$, the norm distance membership problem cannot be solved without an informationally complete measurement}.

The situation changes drastically if we consider the case $\varepsilon=0$. 
This means that our task is to say if the state is exactly the same as the reference state. If $\sigma$ is a pure state, then the answer is simple. 
By measuring just the binary POVM $E_1 = \sigma$, $E_2=\id- \sigma$ we can find out if $\varrho = \sigma$. 
Hence, the membership problem $\state=\{\sigma\}\cupdot\{\sigma\}^C$ can be solved without informational completeness.  
If, on the other hand,  $\sigma$ is in the interior of the state space,  then we need an informationally complete measurement to solve the membership problem; this follows immediately from Proposition~\ref{prop:boundary}.
In order to see that the intermediate cases $1< {\rm rank}\, \sigma<d$ do not require informational completeness, it is enough to find one perturbation operator $\Delta$ such that $\sigma + \lambda\Delta \notin \state$ for any nonzero $\lambda\in\R$. 
For this, we  write the spectral decomposition $\sigma = \sum_{j=1}^r \mu_j\vert\phi_j\rangle\langle\phi_j\vert$ where $r={\rm rank}\, \sigma$ and define 
\begin{equation*}
\Delta = \vert \phi_r \rangle\langle \phi_{r+1}\vert + \vert \phi_{r+1} \rangle\langle \phi_r\vert\in\tzero\, .
\end{equation*}
In matrix form, we then have
\begin{equation*}
\sigma + \lambda \Delta = \left[ \begin{array}{cccccc}
\mu_1 & & & & & \\
& \ddots & & & & \\
& & \mu_r & \lambda & & \\
& & \lambda & 0 & & \\
& & &  & \ddots  & \\
& & & & & 0 
\end{array}\right]
\end{equation*}
which is not positive since it has a negative minor
\begin{equation*}
\left\vert \begin{array}{cc} 
\mu_r & \lambda \\
\lambda & 0 
\end{array}\right\vert = -\lambda^2 <0
\end{equation*}
whenever $\lambda\neq 0$. 
We conclude that \emph{to decide if the state is exactly the same as a reference state $\sigma$ requires an informationally complete measurement if and only if $\sigma$ is a full rank state}.

\subsection{Fidelity membership problem.}\label{subsec:specificB}

As another example of a membership problem related to the closeness of states, we consider the fidelity with respect to a fixed reference state $\sigma$. 
The fidelity $F(\varrho,\sigma)$ of two states $\varrho$ and $\sigma$ is defined as
\begin{equation*}
F(\varrho,\sigma) = \tr{ \sqrt{ \sqrt{\varrho} \sigma \sqrt{\varrho}} } \, .
\end{equation*}
The fidelity is symmetric with respect to its arguments, and it takes its values in the interval $[0,1]$ with $1$ corresponding to the states being equal, and $0$ to the states having disjoint supports. Moreover the fidelity is a concave function, i.e.,
\begin{equation*}
F(t \varrho_1 + (1-t)\varrho_2, \sigma) \geq t F(\varrho_1, \sigma) + (1-t) F(\varrho_2, \sigma) \, .
\end{equation*}
For any  $0\leq \varepsilon \leq 1$, we  denote 
\begin{equation*}
\state^{\geq\varepsilon}_{F(\cdot,\sigma)} = \{\varrho\in\state : F(\varrho,\sigma)\geq\varepsilon\} \, , \quad \state^{ < \varepsilon}_{F(\cdot,\sigma)} = \{\varrho\in\state : F(\varrho,\sigma)<\varepsilon\} \, ,
\end{equation*}
and the membership problem is thus $\state=\state^{\geq\varepsilon}_{F(\cdot,\sigma)} \cupdot \state^{<\varepsilon}_{F(\cdot,\sigma)} $.

Let us first suppose that $\sigma$ is on the boundary of the state space. Then we can always find a perturbation operator $\Delta$ such that $\sqrt{\sigma} \Delta\sqrt{\sigma}=0$. In fact, taking the spectral decomposition \(\sigma = \sum_{j=1}^r \mu_j\vert\phi_j\rangle\langle\phi_j\vert\), and   $\Delta = \vert \phi_r \rangle\langle \phi_{r+1}\vert + \vert \phi_{r+1} \rangle\langle \phi_r\vert$  gives
\begin{equation*}
\sqrt{\sigma} \Delta\sqrt{\sigma} = \sum_{j,k=1}^r \sqrt{\mu_j\mu_k}\,  \vert \phi_j\rangle\langle\phi_j \vert \big(\vert \phi_r \rangle\langle \phi_{r+1}\vert + \vert \phi_{r+1} \rangle\langle \phi_r\vert \big) \vert\phi_k\rangle\langle\phi_k\vert =0 \, .
\end{equation*}
But with this choice we have 
\begin{equation}\label{eq:F_not_strictly_convex}
F(\varrho + \lambda\Delta, \sigma) = \tr{\sqrt{\sqrt{\sigma} (\varrho + \lambda\Delta)\sqrt{\sigma}}} = \tr{\sqrt{\sqrt{\sigma} \varrho \sqrt{\sigma}}}  = F(\varrho, \sigma)
\end{equation}
for all $\varrho\in\state$. In particular, $\varrho + \lambda\Delta \in \state^{\geq\varepsilon}_{F(\cdot,\sigma)} $ for all $\varrho\in\state^{\geq\varepsilon}_{F(\cdot,\sigma)} $ so that by Proposition~\ref{prop:learn-2} \emph{the membership problem $\state^{\geq\varepsilon}_{F(\cdot,\sigma)} \cupdot \state^{<\varepsilon}_{F(\cdot,\sigma)} $ can be solved without an informationally complete measurement.}

As an interesting consequence of \eqref{eq:F_not_strictly_convex} with $\rho\in{\rm Int}\,\state$, we have that, although the fidelity is a concave function, it is not strictly concave when $\sigma$ is a boundary state.

The previous argument does not work when $\sigma$ is in the interior of the state space, and this case requires a separate inspection.
From spectral calculus and \cite[Theorem 2.10]{Carlen2010} we know that the trace functional \(A\mapsto \tr{\sqrt{A}}\) is continuous and strictly mid-point concave on the set of the positive operators $\mathcal{L}_+(\hh)\subset\lhs$. If \(\sigma\) is an interior state, then the affine map \(\varrho\mapsto \sqrt{\sigma}\varrho\sqrt{\sigma}\) is injective on $\state$, hence the composition \(\varrho \mapsto- F(\varrho,\sigma)\) satisfies the hypotheses of Proposition~\ref{prop:strict_convex}. 
Therefore, the membership problem $\state^{\geq\varepsilon}_{F(\cdot,\sigma)} \cupdot \state^{<\varepsilon}_{F(\cdot,\sigma)}$ cannot be solved without an informationally complete measurement set-up.


\subsection{Purity membership problem.}\label{subsec:specificC}

Suppose that our task is to decide if an unknown state  $\varrho$ is pure or mixed.
The corresponding membership problem is hence $\state=\state_{\rm pure} \cupdot \state_{\rm mixed}$. Our first observation is that a measurement is capable of solving this membership problem if and only if it can uniquely determine every pure state among all states. Indeed, the sufficiency of the latter property is obvious. To prove its necessity, let $E$ be a POVM such that there exist a pure state $\varrho_1$ and another state $\varrho_2\neq\varrho_1$ such that $\tr{\varrho_1E_j}=\tr{\varrho_2E_j}$ for all $j$. 
Then
\begin{equation*}
\tr{\left(\half\varrho_1 + \half\varrho_2\right) E_j} = \half\tr{\varrho_1 E_j} + \half\tr{\varrho_2E_j} = \tr{\varrho_1E_j}
\end{equation*}
for all $j$, which shows that the mixed state $\half\varrho_1 + \half\varrho_2$ cannot be distinguished from the pure state $\varrho_1$. In other words, $E$ fails to solve the membership problem. This means that if a measurement can solve the membership problem, then the measurement outcome distribution corresponding to a pure state determines the state uniquely among all states.

It is known that for dimensions $d=2$ and $d=3$ a collection of measurements that gives a unique measurement statistics to every pure state among all states must be informationally complete \cite{CaHeScTo14}.
Therefore, for qubits and qutrits, the membership problem cannot be solved without an informationally complete measurement. 
We can see this directly by observing that an arbitrary perturbation operator $\Delta$ can be written, up to a scaling, as a difference of a pure state and a mixed state.
In the qubit case, if $U$ is a unitary operator that diagonalizes $\Delta$, we have 
\begin{align*}
U\Delta U^* = \left[\begin{array}{cc}\lambda & 0 \\ 0 & -\lambda \end{array}\right] =-2 \lambda  \left( \left[\begin{array}{cc}0 & 0 \\ 0 & 1 \end{array}\right]- \left[\begin{array}{cc}\half & 0 \\ 0 & \half \end{array}\right] \right) \, ,
\end{align*}
which is a difference of a mixed state and pure state. In the qutrit case this fact follows from the previous equation whenever $\rank\Delta = 2$, and from
\begin{align*}
U\Delta U^* = \left[\begin{array}{ccc}\lambda & 0 & 0 \\0 & \mu & 0 \\0 & 0 & -\lambda -\mu\end{array}\right] = -(\lambda + \mu)\left( \left[\begin{array}{ccc}0 & 0 & 0 \\0 & 0 & 0 \\0 & 0 & 1\end{array}\right] -  \left[\begin{array}{ccc}\frac{\lambda}{\lambda+\mu} & 0 & 0 \\0 & \frac{\mu}{\lambda + \mu} & 0 \\0 & 0 & 0\end{array}\right] \right)
\end{align*}
with $\lam,\mu>0$ when $\rank\Delta = 3$.

In dimensions $d\geq 4$, there are measurements that solve the membership problem $\state=\state_{\rm pure} \cupdot \state_{\rm mixed}$, or, equivalently, give a unique measurement statistics to every pure state among all states, but fail to be informationally complete \cite{ChDaJietal13}. 
To see this, we choose a perturbation operator
\begin{equation*}
\Delta = \kb{\phi_1}{\phi_1} + \kb{\phi_2}{\phi_2} -\kb{\phi_3}{\phi_3} -\kb{\phi_4}{\phi_4} \, , 
\end{equation*}
where $\{\phi_j\}_{j=1}^d$ is an orthonormal basis.
By Weyl's inequalities \cite[Theorem III.2.1]{Bhatia}, $\lam\Delta$ cannot be written as a difference of a pure state and a mixed state for any $\lam\in\R$. Therefore, any measurement with $\rr^\perp=\R\Delta$ solves the membership problem $\state=\state_{\rm pure} \cupdot \state_{\rm mixed}$, but is not informationally complete as it can not distinguish the mixed states $(\kb{\phi_1}{\phi_1} + \kb{\phi_2}{\phi_2})/2$ and $(\kb{\phi_3}{\phi_3} + \kb{\phi_4}{\phi_4})/2$.

We can try to relax the problem by asking if the unknown state is almost pure. There are two natural ways to quantify this: the purity $f_1(\varrho)=\tr{\varrho^2}$ and the von Neumann entropy $f_2(\varrho)=-\tr{\varrho \log_2\varrho}$. Purity takes its values in the interval $[1/d,1]$ with $1$ corresponding to a pure state, whereas von Neumann entropy is in $[0,\log_2 d]$ with pure states giving the value $0$. 
However, the purity is strictly convex since it is just the square of the Hilbert-Schmidt norm. 
Also, the von Neumann entropy is strictly concave \cite[Theorem 2.10]{Carlen2010}, hence the function $-f_2$ is strictly convex. 
As a direct consequence of Proposition~\ref{prop:strict_convex}, we conclude that the \emph{membership problems $\state=\state^{\leq\varepsilon}_{f_1}\cupdot \state^{>\varepsilon}_{f_1}$ and $\state=\state^{\geq\varepsilon}_{f_2}\cupdot \state^{<\varepsilon}_{f_2}$ cannot be solved without an informationally complete measurement.}

\subsection{Rank membership problem.}
\label{subs:finalf}
Since in dimensions $d\geq 4$ it is possible to decide if a state is pure without an informationally complete measurement, it makes sense to ask if we can also determine the rank of the state without identifying it. 
We denote
$
\state_{\rm rank}^{r} = \{\varrho\in\state : {\rm rank}\, \varrho =r\}
$
and the membership problem is hence $\state=\bigcupdot_{r=1}^d \state_{\rm rank}^r$. However, since $\state_{\rm rank}^d$ is the interior of $\state$, Proposition~\ref{prop:boundary} immediately tells us that the membership problem cannot be solved without informational completeness. 
We can relax the problem and consider whether or not the rank of the state is below some value $r\leq d-1$, i.e., the membership problem is then $\state=\state^{\leq r}_{\rm rank}\cupdot \state^{>r}_{\rm rank}$. 
As we will see, the solution of this membership problem depends on the bound $r$.

To this aim, we first recall that for a perturbation operator $\Delta$, we write $\Delta = \Delta_+ - \Delta_-$, where $\pm\Delta_\pm$ are the positive and negative parts of $\Delta$ as given by the spectral decomposition.
It follows that $\rank\Delta_+ + \rank\Delta_- =\rank\Delta$, and this decomposition is minimal in ranks in the following sense:
whenever $\Delta=\lambda (\varrho_1-\varrho_2)$ for some $\lambda>0$ and $\varrho_1,\varrho_2\in\state$, then $\rank\varrho_1\geq \rank\Delta_+$ and $\rank\varrho_2\geq \rank\Delta_-$; see e.g. \cite[Lemma 1.(c)]{CaHeScTo14} for a proof.

We now split the treatment into two cases.
Let us first consider the case when the bound $r<\floor{d/2}$, where $\floor{d/2}$ denotes the largest integer not greater than $d/2$.  
We choose a perturbation operator $\Delta$ such that ${\rm rank}\, \Delta_\pm = \floor{d/2}>r$. 
By the earlier remark, then in any decomposition $\Delta=\lambda (\varrho_1-\varrho_2)$ we always have $\varrho_1, \varrho_2 \in \state_{\rankbis}^{>r}$. 
Hence, Theorem \ref{thm:learn} implies that in this case the membership problem $\state=\state^{\leq r}_{\rm rank}\cupdot \state^{>r}_{\rm rank}$ can be solved without an informationally complete measurement.

We now move to the case  $r\geq \floor{d/2}$, and show that in this case informational completeness is needed. Since for any perturbation operator $\Delta=\Delta_+ - \Delta_-$ we have $\rank\Delta_+ + \rank\Delta_-\leq d$, we must have $\rank\Delta_+\leq r$ or $\rank\Delta_-\leq r$. If one of the ranks, say $\rank\Delta_-$, is strictly greater than $r$, then we can define $\varrho = \Delta_- /\tr{\Delta_-}\in\state_{\rankbis}^{>r} $ so that $\varrho + \Delta/\tr{\Delta_-} = \Delta_+/\tr{\Delta_-} $ is a state and
$$
\rank(\varrho + \Delta/\tr{\Delta_-})  = \rank(\Delta_+/\tr{\Delta_-})\leq r \, .
$$
Hence, $\varrho + \Delta/\tr{\Delta_-}\in\state_{\rankbis}^{\leq r}$.  If, on the other hand, both $\rank\Delta_\pm\leq r$ then we have two possibilities. If $\rank{\abs{\Delta}} >r$ then we set $\varrho = \abs{\Delta}/\tr{\abs{\Delta}}\in\state_{\rankbis}^{ >r}$ and we have
$$
\varrho + \Delta/\tr{\abs{\Delta}} = 2\Delta_+ /\tr{\abs{\Delta}}  \in \state_{\rankbis}^{\leq r}\, .
$$
If $\rank\abs{\Delta} \leq r$ then, since $r\leq d-1$, we can pick an orthogonal projection \(P\) onto a \((r+1-\rank\abs{\Delta})\)-dimensional subspace orthogonal to \({\rm supp}\,\abs{\Delta}\). For such a projection, we have 
\begin{equation*}
\rank(\abs{\Delta} + P) = r+1>r
\end{equation*}
but 
\begin{equation*}
\rank(2\Delta_+ + P) = \rank(\Delta_+ + P) = \rank(\abs{\Delta} + P) - \rank\Delta_- \leq r \, .
\end{equation*}
We can then set
$$
\varrho = (\abs{\Delta} + P)/\tr{\abs{\Delta} + P} \in \state_{\rankbis}^{>r} \, ,
$$
so that we have 
$$
\varrho + \Delta/\tr{\abs{\Delta} + P} = (2\Delta_+ + P)/\tr{\abs{\Delta} + P} \in\state_{\rankbis}^{\leq r} \,.
$$
In all the cases, we have found a state $\rho\in\state_{\rankbis}^{>r}$ and $\lam\in\R$ such that $\rho + \lam\Delta\in\state_{\rankbis}^{\leq r}$. By Proposition~\ref{prop:learn-2} we conclude that the membership problem $\state=\state^{\leq r}_{\rm rank}\cupdot \state^{>r}_{\rm rank}$ with $r \geq\floor{d/2}$ cannot be solved without an informationally complete measurement set-up, as claimed. 

Summing up, \emph{the membership problem $\state=\state^{\leq r}_{\rm rank}\cupdot \state^{>r}_{\rm rank}$ can be solved without an informationally complete measurement if and only if $r <\floor{d/2}$}.

\section{Minimal number of outcomes}\label{sec:minimal}

\subsection{General formulation}\label{subsec:general-min}

Whenever a membership problem can be solved without an informationally complete measurement, it is natural to ask what are the optimal measurements for solving the problem. One possible way to quantify optimality is in terms of the minimal number of measurement outcomes: we look for a POVM consisting of as few operators as possible while retaining the ability to solve the problem. This question has recently been studied extensively for various tomographic problems where prior information is exploited to reduce the number of outcomes \cite{HeMaWo13,ChDaJietal13,CaHeScTo15,KeWo15,KeVrWo15,CaHeKeScTo16,BaDeKa16,Ma16,Kech16}. 
In this section we make some observations on the membership problems studied in Section \ref{sec:specific}.

Recall that from the point of view of state distinguishability, it is the operator system $\rr$ generated by the POVM that is the relevant mathematical object. Although the same operator system can be generated by many different POVMs, there always exist generating POVMs having $\dim \rr$ different elements by \cite[Proposition 1]{HeMaWo13}. Clearly, no POVM with less elements can generate the same operator system. Hence, the minimal number of measurement outcomes for a given membership problem is given by the minimal dimension of the operator system which succeeds in the task. Furthermore, since $\dim\lhs =d^2 = \dim \rr + \dim \rr^\perp$, we can equivalently search for the maximal dimension of the orthogonal complement $\rr^\perp$. 

In Section \ref{sec:membership} we noted that an operator system $\rr$ solves the membership problem $\state=\cupdot_j \prem_j$ if and only if for any $\varrho_j\in\prem_j$ and $\varrho_k\in\prem_k$ with $j\neq k$, we have $\varrho_j-\varrho_k\notin \rr^\perp$. 
This can be restated as
\begin{equation}\label{eq:empty}
\left[ \bigcup_{j\neq k } (\prem_j - \prem_k) \right] \cap \rr^\perp = \emptyset \, , 
\end{equation}
and the question hence boils down to finding the maximal dimension of a subspace $\rr^\perp\subset\tzero$ which satisfies \eqref{eq:empty}.

\subsection{State identification.} 

Let $\sigma\in\state$ be a fixed reference state with $r = \rank\sigma <d$, and consider the problem of determining if an unknown state $\varrho\in\state$ is equal to $\sigma$, i.e., the membership problem is $\state=\{\sigma\}\cupdot\{\sigma\}^C$. 
As we have seen earlier, this can be solved without informational completeness (see Section \ref{sec:specific}.\ref{subsec:specificA}).
Further, it is relatively easy to calculate the minimal number of outcomes. To this end, note that $\sigma$ is on the face $\mathcal{F}(\sigma) = \{ \varrho\in\state : {\rm supp}\, \varrho \subseteq {\rm supp}\, \sigma \}$ of the state space and we have $\mathcal{F}(\sigma)  \simeq \state(\C^r)$. We can thus view $\sigma$ as an interior point of $\state(\C^r)$, and therefore distinguishing a state $\varrho\in\mathcal{F}(\sigma)$ from $\sigma$ requires ``informational completeness on the face $\mathcal{F}(\sigma)$''. In other words, the minimal number of outcomes is at least $r^2$.

On the other hand, suppose that we have $r^2$ linearly independent positive operators $E_1,\ldots, E_{r^2}$ which sum up to the projection onto ${\rm supp}\, \sigma$, that we denote by $Q$. By adding one more element $E_{r^2+1} = \id-Q$, which corresponds to verifying if the state $\varrho$ is actually on the face $\mathcal{F}(\sigma)$, we have constructed a POVM with $r^2+1$ outcomes which solves the membership problem $\state=\{\sigma\}\cupdot\{\sigma\}^C$.

In order to show that $r^2+1$ (and not $r^2$) is actually the minimal number of outcomes, it is enough to prove that the set $\R(\{\sigma\}-\{\sigma\}^C)$ contains a $r^2$-dimensional linear subspace $\xx\subset\tzero$.  Indeed, \eqref{eq:empty} then implies that $\xx\cap\rr^\perp = \{0\}$ for any $\rr$ that solves the membership problem. 
Hence $\dim\xx + \dim\rr^\perp \leq \dim\tzero = d^2-1$, implying $\dim\rr\geq r^2+1$. 
So, we fix a state $\tau$ with ${\rm supp}\,\tau \nsubseteq {\rm supp}\,\sigma$ and denote
$$
\xx = \{\lam[\sigma - (t\varrho + (1-t)\tau)] : \lam\in\R,\ t\in[0,1],\ \varrho\in\mathcal{F}(\sigma)\} \, , 
$$
and we claim that the set $\xx$ is as stated above.
For the linear combination of elements of $\xx$, we have
\begin{equation}\label{eq:linear_comb}
\lam_1[\sigma - (t_1\varrho_1 + (1-t_1)\tau)] + \lam_2[\sigma - (t_2\varrho_2 + (1-t_2)\tau)] = \Delta + \mu(\sigma-\tau)
\end{equation}
where $\Delta = \lam_1 t_1 (\sigma-\varrho_1) + \lam_2 t_2 (\sigma-\varrho_2)$ and $\mu = \lam_1 (1-t_1) + \lam_2 (1-t_2)$. If $\Delta = 0$, \eqref{eq:linear_comb} is clearly an element in $\xx$. Otherwise, $\Delta$ is a perturbation operator with ${\rm supp}\,\Delta\subseteq {\rm supp}\,\sigma$, hence it is given by $\lam(\sigma-\varrho_\lam)$ for sufficiently large $\abs{\lam}\in\R$ and $\varrho_\lam=\sigma-\lam^{-1}\Delta\in\mathcal{F}(\sigma)$. The linear combination \eqref{eq:linear_comb} is then
$$
\lam(\sigma-\varrho_\lam) + \mu(\sigma-\tau) \,,
$$
and to prove that it is still an element of $\xx$, it suffices to show that
$$
\lam = \lam't \quad \text{and} \quad \mu = \lam'(1-t) \quad \text{for some} \quad \lam'\in\R,\ t\in [0,1]
$$
for sufficiently large $\abs{\lam}$. Solving the latter equations, we get $\lam' = \lam+\mu$ and $t=\lam/(\lam+\mu)\in [0,1]$ if $\lam$ is chosen with the same sign as $\mu$. This completes the proof that $\xx$ is a linear space. Since $\xx$ is generated by the linear subspace
$$
\tzero(\sigma) = \{\lam(\sigma-\varrho) : \lam\in\R,\ \varrho\in\mathcal{F}(\sigma)\} = \{\Delta\in\tzero : {\rm supp}\,\Delta\subseteq {\rm supp}\,\sigma\}
$$
together with the operator $\sigma-\tau\in\tzero\setminus\tzero(\sigma)$, it follows that $\dim\xx = \dim\tzero(\sigma) + 1 = r^2$, as claimed.

\subsection{Fidelity separation.}

Let $\sigma\in\state$ again be a fixed reference state on the boundary of the state space, and consider the fidelity membership problem $\state=\state^{\geq\varepsilon}_{F(\cdot,\sigma)} \cupdot \state^{<\varepsilon}_{F(\cdot,\sigma)} $. 
As discussed earlier, the perturbation operators $\Delta$ with $\sqrt{\sigma}\Delta \sqrt{\sigma} =0$  satisfy $F(\varrho + \Delta, \sigma) = F(\varrho,\sigma)$. Therefore, if we set $\rr^\perp = \{\Delta \in\tzero : \sqrt{\sigma}\Delta \sqrt{\sigma} =0 \}$ we have  $(\state^{\geq\varepsilon}_{F(\cdot,\sigma)} - \state^{<\varepsilon}_{F(\cdot,\sigma)}) \cap \rr^\perp=\emptyset$ and  thus $\rr$ is an operator system which solves the membership problem. 
By computing the dimension of $\rr^\perp$ we then have an upper bound on the minimal number of outcomes for the fidelity membership problem. 

By using the spectral decomposition $\sigma = \sum_{j=1}^r \mu_j \vert\phi_j \rangle \langle \phi_j\vert$ where $r = \rank\sigma$, we have that
\begin{equation*}
\sqrt{\sigma}\Delta \sqrt{\sigma}  = \sum_{j,k=1}^r \sqrt{\mu_j\mu_k} \langle \phi_j \vert \Delta \phi_k\rangle \vert \phi_j \rangle \langle \phi_k \vert =0
\end{equation*}
if and only if $\langle \phi_j \vert \Delta \phi_k \rangle = 0$ for all $j,k=1,\ldots, r$. Since $\Delta^*=\Delta$ and $\tr{\Delta}=0$, a simple calculation shows that there are still $d^2-r^2-1$ free real parameters in $\Delta$. In other words, $\dim\, \rr^\perp= d^2 - r^2 - 1$. 
It follows that
\begin{equation*}
\dim\, \rr = d^2 - \dim\, \rr^\perp = r^2+1
\end{equation*}
which gives us an upper bound for the minimal number of outcomes for the membership problem $\state=\state^{\geq\varepsilon}_{F(\cdot,\sigma)} \cupdot \state^{<\varepsilon}_{F(\cdot,\sigma)} $.

\subsection{Rank separation.}

We have already noted in Section \ref{sec:specific}.\ref{subsec:specificC} that any measurement set-up which is capable of telling if a state is pure or not must distinguish any pure state from all other states. 
We will now prove that the rank membership problem $\state=\state^{\leq r}_{\rm rank}\cupdot \state^{>r}_{\rm rank}$, with $r\leq d-1$, shows a similar effect, namely, that a measurement set-up which solves this problem must distinguish any state in $\state^{\leq r}_{\rm rank}$ from any other state.

To see this, let $\rr$ be an operator system which solves the membership problem. It is enough to show that $\rr$ distinguishes every state $\varrho_1\in\state^{\leq r}_{\rm rank}$ from all other states $\varrho_2\in\state^{\leq r}_{\rm rank}$, as for $\varrho_2\in\state^{> r}_{\rm rank}$ this is already implied by the starting assumption.
So, suppose by contradiction that \(\rho_1-\rho_2\in\rr^\perp\) for some \(\rho_1,\rho_2\in\state^{\leq r}\); we are then going to prove that there exist \(\rho\in  \state^{\leq r}\) and \(\sigma\in \state^{> r}\) such that \(\rho_1-\rho_2 = \lambda(\rho-\sigma)\). This shows that \(\rr\) does not distinguish between $\rho$ and $\sigma$, and thus gives the desired contradiction. For $\Delta = \rho_1-\rho_2$, notice that
\[
\Delta=\abs{\Delta}-2\Delta_-
\]
where \(-\Delta_-\) is the negative part of \(\Delta\). We have the inequalities \(\rank\Delta_- <\rank\abs{\Delta}\) and \(\rank\Delta_-\leq r\). The latter inequality follows from the relation \(\Delta=\rho_1-\rho_2\) and \cite[Lemma 1.(c)]{CaHeScTo14}.  If \(\rank\abs{\Delta}> r\) we are done, since we have
\[
\Delta=\tr{\abs{\Delta}}\left( \rho-\sigma \right)
\]
with \(\rho=\abs{\Delta}/\tr{\abs{\Delta}}\) and \(\sigma=2\Delta_- /\tr{\abs{\Delta}}\). Otherwise, if $\rank\abs{\Delta}\leq r$, then we can pick an orthogonal projection \(P\) onto a \((r+1-\rank\abs{\Delta})\)-dimensional subspace orthogonal to \({\rm supp}\,\abs{\Delta}\). We have
\[
\rank(2\Delta_- + P)< \rank(\abs{\Delta}+P)=r+1 \,,
\]
and we can write 
\(
\Delta=-\tr{\abs{\Delta}+P}\left( \rho-\sigma \right)
\)
with
\begin{align*}
\varrho = (2\Delta_- +P) /\tr{\abs{\Delta}+P} \in\state^{\leq r} \, , \quad \sigma = (\abs{\Delta}+P)/\tr{\abs{\Delta}+P} \in\state^{>r} \,.
\end{align*}

We conclude that the search for the minimal number of outcomes for the membership problem $\state=\state^{\leq r}_{\rm rank}\cupdot \state^{>r}_{\rm rank}$ is equivalent to the problem for measurements distinguishing states in $\state^{\leq r}_{\rm rank}$ from all other states. 
In \cite[Theorem 2]{ChDaJietal13} it was shown that there exist $4r(d-r) + d-2r -1$ selfadjoint operators which succeed in the task, and this corresponds to $4r(d-r) + d-2r$ POVM outcomes by \cite[Proposition 3]{HeMaWo13}. This is therefore an upper bound for the minimal number of outcomes for a POVM that solves the rank separation membership problem. Accordingly to what we have shown in Section~\ref{subs:finalf}, when \(r\geq \floor{\frac{d}{2}}\) the above upper bound becomes trivial.

\section{Conclusion}\label{sec:conclusion}
Any partitioning of the quantum state space defines the corresponding membership problem, namely, the problem of deciding to which subset a given unknown state belongs to. Based on the amount of information (number of measurement outcomes) needed to solve them, the difficulty of the membership problems can be compared. In this sense, the membership problems that require an informationally complete measurement form the  class of problems  most difficult to solve. Here we have proved various general criteria of a geometric nature, which force a membership problem to fall into this class. We have also dealt with different explicit examples such as the problem of deciding if a prepared state is close to a target state, or if the rank of an unknown state is below a given bound. For the cases that do not require informational completeness, we have studied in more detail the minimal number of measurement outcomes needed for solving the membership problem. 

Our approach is based on performing individual measurements on single copies of a quantum state. As such, it should be compared with the collective measurement approach, where instead \(N\) identical copies of the same state are subject to a global measurement on the tensor product space. A survey on quantum property detection by means of collective measurement is provided e.g.~in \cite[Section 4]{MW2016} (see also the references therein); applications to entanglement detection can be found in \cite{ADH08} and \cite{ZCO11}. We remark that all membership problems involving sublevel sets of polynomial functions of the state can be solved within the collective framework \cite{Br04}.

\section*{Acknowledgments}
JS acknowledges financial support from the EU through the Collaborative Projects QuProCS (Grant Agreement No. 641277).
TH acknowledges financial support from the Academy of Finland (Project No. 287750).\\
{TH and JS thank Michael Kech for inspiring discussions on the topic of state determination during his visit in Turku.}


\end{document}